\documentclass[aps,superscriptaddress,twocolumn,twoside,floatfix,pra,nofootinbib,a4paper]{revtex4-2}

\usepackage{main}
\usepackage{multirow}
\usepackage{makecell}


\begin{document}
\renewcommand{\ref}[1]{\text{\color{fr}Use {\tt\textbackslash{}cref}}}
\renewcommand{\lambda}{\text{\color{fr}Use {\tt\textbackslash saaa}, {\tt\textbackslash saab}, or {\tt\textbackslash sabc}. See {\tt packages/maths.sty}.}}

\title{Exploring the Local Landscape in the Triangle Network}

\author{Elisa B\"aumer}
\thanks{These authors contributed equally.}
\affiliation{Institute for Theoretical Physics, ETH Zurich, 8093 Z\"urich, Switzerland}

\author{Victor Gitton}
\thanks{These authors contributed equally.}
\affiliation{Institute for Theoretical Physics, ETH Zurich, 8093 Z\"urich, Switzerland}

\author{Tam\'as Kriv\'achy}
\thanks{These authors contributed equally.}
\affiliation{Atominstitut, Technische Universit\"at Wien, 1020 Vienna, Austria}
\affiliation{D\'epartement de Physique Appliqu\'ee, Universit\'e de Gen\`eve, CH-1211 Gen\`eve, Switzerland}

\author{Nicolas Gisin}
\affiliation{D\'epartement de Physique Appliqu\'ee, Universit\'e de Gen\`eve, CH-1211 Gen\`eve, Switzerland}
\affiliation{Schaffhausen Institute of Technology -- SIT, Geneva, Switzerland}

\author{Renato Renner}
\affiliation{Institute for Theoretical Physics, ETH Zurich, 8093 Z\"urich, Switzerland}

\begin{abstract}

Characterizing the set of distributions that can be realized in the triangle network is a notoriously difficult problem.
In this work, we investigate inner approximations of the set of local (classical) distributions of the triangle network.
A quantum distribution that appears to be nonlocal is the Elegant Joint Measurement (EJM) [Entropy. 2019; 21(3):325], which motivates us to study distributions having the same symmetries as the EJM.
We compare analytical and neural-network-based inner approximations and find a remarkable agreement between the two methods.
Using neural network tools, we also conjecture network Bell inequalities that give a trade-off between the levels of correlation and symmetry that a local distribution may feature.
Our results considerably strengthen the conjecture that the EJM is nonlocal.

\end{abstract}

\maketitle

\section{Introduction}

The problem of nonlocality, in general, requires to study classical (or ``hidden-variable'') causal models.
While this study is already extensive in the Bell scenario, featuring Alice and Bob sharing a common source, the case of more general networks remains poorly understood, with many open questions \cite{tavakoli_bell_2022}.
From a mathematical and computational perspective, the presence of multiple independent sources is the root of the difficulties associated to proving network nonlocality.
An interesting feature of network nonlocality is that, in general, and contrarily to Bell nonlocality, it is not necessary to provide different inputs to the parties to obtain a nonlocal behavior.
Here, we consider one of the simplest non-trivial networks, the triangle network, which consists of Alice, Bob and Charlie sharing only independent bipartite sources as depicted in \cref{fig:triangle}.
The first known example of quantum nonlocality in the triangle network dates back to the Fritz distribution \cite{Fritz2012}, which was shown to admit an experimental implementation that was provably nonlocal~\cite{polino_experimental_2023}.
This example of triangle nonlocality is quite specific, since the Fritz distribution is based on embedding a bipartite Bell test in the triangle network.
The other example of triangle nonlocality is the RGB4 family of distributions introduced in \cite{Renou2019} and later extended in \cite{abiuso_single-photon_2022,pozas-kerstjens_proofs_2023,boreiri_minimal_2023,boreiri_noise-robust_2023}.

\begin{figure}[tb!]
    \centering
        \includegraphics{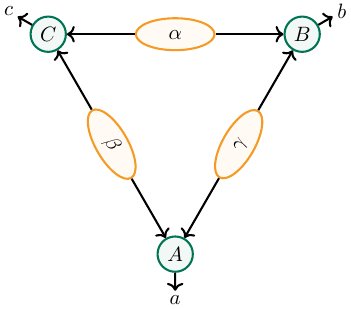}
	\caption{The triangle network features three observers (green circles), connected by three independent bipartite sources (yellow ellipses): Alice has access to the $\beta,\gamma$ sources, while Bob has access to $\gamma,\alpha$ and Charlie to $\alpha,\beta$.
        }
	\label{fig:triangle}
\end{figure}

Overall, compared to the Bell scenario, very few examples of triangle nonlocality are known to date.
In an attempt to find another example of nonlocality in the triangle network, the so-called \emph{Elegant Joint Measurement} (EJM) was introduced in \cite{gisin_published_ejm}.
This measurement is a two-qubit measurement that projects onto a partially entangled basis with a tetrahedral symmetry in the Bloch sphere for the marginal basis states.
The EJM can be used in different networks: in \cite{tavakoli2021}, a network Bell inequality has been tailored to be violated with the EJM applied in the bilocal scenario. This bilocal scenario can be thought of as the triangle network with the source between Alice and Bob being removed. The latter Bell inequality has been experimentally violated in \cite{Baeumer2021}.
When this measurement is used in the triangle network with the three sources prepared in singlet states, the resulting outcome distribution is referred to as the EJM distribution.
This EJM distribution exhibits high correlations between the three parties, as well as symmetry under both permutation of the parties and joint permutation of the outcomes.
The high level of symmetry of the distribution coupled to the relatively low number of outputs per party, namely, four, makes this distribution a prime candidate for the study of triangle nonlocality.
Indeed, the simplicity of the distribution suggests that a simple proof of nonlocality should exist, although this remains an open problem.
This is for instance in analogy to the case of Werner states \cite{werner_states}: their high level of symmetry is what enabled to show that there exist entangled states that admit a Bell-type hidden variable model.
The EJM distribution has been initially conjectured to be nonlocal~\cite{gisin_published_ejm}, and even conjectured to be nonlocal under reasonable experimental noise~\cite{krivachy_neural_2020}.
The interest for this distribution has been reinforced by a recent experimental implementation of the EJM measurement in the triangle network~\cite{wang2024experimental}.
Motivated by the open question of the nonlocality of the EJM, we consider the following problem: can we give a characterization of the local set in the triangle network for those distributions that share the same symmetries as the EJM distribution?

Our results are organized as follows.
In \cref{sec:setup}, we set the stage by defining triangle (non)locality, and we describe in particular the representation that we will use to depict a classical causal model in the triangle network, i.e., a classical strategy for Alice, Bob and Charlie.
We discuss quantum strategies and in particular the EJM.
We then introduce the notion of \emph{fully-symmetric} distributions, which refers to the distributions sharing the same symmetries as the EJM distribution.
We then move on to obtaining inner approximations of the set of fully-symmetric local distributions.
In \cref{sec:analytical flags}, we describe how we obtained a ``large'' analytical inner approximation.
We then describe in \cref{sec:neural} how a neural-network tool~\cite{krivachy_neural_2020} can be used to obtain another sort of inner approximation.
As we will see, the two methods agree surprisingly well, thus suggesting the quality and completeness of either method.
Moreover, aided with the neural network approach, we conjecture network Bell inequalities which would hold for any local distribution, even a non-symmetric one, and which are violated by the EJM distribution.
Our results suggest that the EJM distribution is relatively far from the local set, which would make it a prime target for a noise-robust proof of genuine triangle nonlocality.
We conclude in \cref{sec:conclusion} with remaining open problems and future research directions.

\section{Setup: locality and symmetry}
\label{sec:setup}

The setup we consider here is the triangle scenario, which is a network of three observers, Alice, Bob and Charlie, and three independent sources $\alpha$, $\beta$ and $\gamma$, as depicted in \cref{fig:triangle}.

\subsection{Classical setup}
\label{sec:flags}

\subsubsection{Local strategies}

If the sources are classical, then Alice, Bob and Charlie would use a classical strategy, i.e., a conditional distribution $p_A(a|\beta,\gamma)$ for Alice, $p_B(b|\gamma,\alpha)$ for Bob and $p_C(c|\alpha,\beta)$ for Charlie.
We say that a distribution $p(a,b,c)$ is \emph{local} or classical\footnote{A more precise terminology would be to call such a distribution classically local in the triangle network. This would be in contrast to \emph{quantumly} local distributions in the triangle network, where ``local'' would simply refer to the connectivity of the triangle network. However, following the standard terminology of the field, we use the abbreviation ``local distribution'' in place of ``classically local distribution''.} (in the triangle network) if there exist $p_A$, $p_B$ and $p_C$ such that
\begin{multline}
\label{trilocal}
	p(a,b,c) = \\
	\int_{[0,1]^{\times3}} \dd \alpha \dd \beta \dd \gamma  p_A(a|\beta, \gamma) \, p_B(b|\gamma,\alpha) \, p_C(c|\alpha,\beta),
\end{multline}
where $\alpha,\beta,\gamma \in [0,1]$ represent the values distributed by each source.
Otherwise, $p(a,b,c)$ is said to be \emph{nonlocal}.
The set of local distributions is non-convex, which makes its description quite complex.
Note that we assumed without loss of generality that the sources in this classical setting are distributed uniformly over the interval $[0,1]$.
We may also assume without loss of generality that the local strategies determining the output of each party are given by deterministic response functions $a(\beta,\gamma)$, $b(\alpha,\gamma)$ and $c(\alpha,\beta)$, respectively.
The probabilistic response functions are then related to the deterministic response functions by
\begin{align*}
    p_A(a|\beta,\gamma) &= \delta_{a,a(\beta,\gamma)}, \\
    p_B(b|\gamma,\alpha) &= \delta_{b,b(\gamma,\alpha)}, \\
    p_C(c|\alpha,\beta) &= \delta_{c,c(\alpha,\beta)}.
\end{align*}
As every function depends on only two variables, we can nicely illustrate the two independent parameters for each party as two dimensions and indicate the corresponding outputs by different colors. 
This maps the response functions to what we call ``flags'', that might give a more intuitive understanding of the different strategies. 
Let us illustrate that with the following example.

\subsubsection{Example flags for a highly correlated distribution}
\label{sec:highcorr}

 \begin{figure}[htb]
	\centering
        \includegraphics{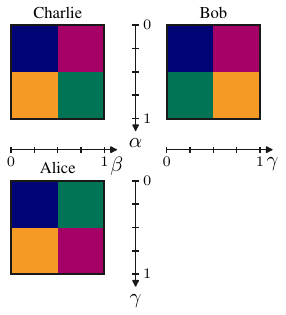}
	\caption{Flags illustrating the response functions that the three parties Alice, Bob and Charlie may use in the triangle network (\cref{fig:triangle}).
    For instance, Alice outputs her outcome $a$ given her two input values $\beta,\gamma \in [0,1]$.
    The different values of the outcomes (such as $a$) are indicated by different colors. }
	\label{fig:highcorr}
\end{figure}
\Cref{fig:highcorr} showcases output strategies of the three parties Alice, Bob and Charlie, as functions of their respective local variables. 
Let us denote {\color{fb}``blue''} by $1$, {\color{fr}``pink''} by $2$, {\color{fy}``yellow''} by $3$ and {\color{fg}``green''} by $4$. Then for example Charlie's strategy is given by
\begin{itemize}
	\item output {\color{fb}``blue''} $\equiv 1$ for $\alpha \in [0,0.5]$, $\beta \in [0,0.5]$,
	\item output {\color{fr}``pink''} $\equiv 2$ for $\alpha \in [0,0.5]$, $\beta \in (0.5,1]$,
	\item output {\color{fy}``yellow''} $\equiv 3$ for $\alpha \in (0.5,1]$, $\beta \in [0,0.5]$,
	\item output {\color{fg}``green''} $\equiv 4$ for $\alpha \in (0.5,1]$, $\beta \in (0.5,1]$.
\end{itemize}
The strategies for Alice and Bob can be determined analogously.
In this example, the output probability distribution can be summarized as follows:
\begin{align}
\label{eq:high111local}
	p(a,b,c) &= \frac{1}{8} \big( [1,1,1] + [2,2,2] + [3,3,3] + [4,4,4] \nonumber\\
	&\quad + [1,4,3]+[2,3,4] + [3,2,1] + [4,1,2] \big).
\end{align}
where we defined the deterministic distribution $[k,l,m]$ as $[k,l,m](a,b,c) = \delta_{ak}\delta_{bl}\delta_{cm}$.
Note that while here the probability that all three parties output the same value is given by 
$p(A=B=C) = \sfrac{1}{2}$,
indicating very strong correlations, the distribution is not symmetric.
In \cref{sec:analytical flags} we use this flag model to investigate local strategies that yield fully symmetric probability distributions.

\subsection{Quantum setup}
\label{sec:quantumsetup}

\subsubsection{Quantum strategies}

Let us move on to the quantum setting, where instead of outputting classical values that follow a classical probability distribution, the sources distribute entangled quantum systems.
Note that in this setting, contrary to the standard Bell nonlocality tests, the observers receive no setting, or input, other than their respective two quantum systems.
Quantum outcome distributions include as a special case the local distributions of \cref{trilocal}, but can also create nonlocal distributions~\cite{Fritz2012,Renou2019}.
However, in general, it is not straightforward to create such nonlocal distributions or to demonstrate nonlocality for a given distribution, i.e., to prove that the distribution could not be reproduced by a classical local model.

In \cite{Renou2019}, nonlocal quantum distributions were found in the triangle network by using quantum sources and joint measurements with entangled eigenstates. 
The nonlocality of these distributions could not be traced back to the standard violation of Bell inequalities. 
Thus, their nonlocality appears to be fundamentally different, which is a major step toward characterizing true quantum phenomena. 
However, no reasonable noise-robust proof of nonlocality has yet been found, rendering an experimental implementation impossible. 
In the next section, we present in more detail another entangled measurement scheme, which is conjectured to be nonlocal with an appropriate noise-robustness. 
In addition to high correlations, it features a very high level of symmetry.

\subsubsection{The Elegant Joint Measurement}
\label{sec:EJM}

The Elegant Joint Measurement (EJM) was first introduced in 2017 and describes a measurement of two qubits projected onto a basis of partially entangled states with a tetrahedral symmetry (see \cite{gisin_published_ejm} for more details).
When applied to the triangle scenario, the setting considered in \cite{gisin_published_ejm} starts with all three parties sharing pairwise the maximally entangled singlet state $\ket{\Psi^{-}} = \frac{1}{\sqrt{2}} (\ket{01}-\ket{10})$ and then performing the EJM onto their two respective qubits. 
Each party obtains an output $a,b,c\in\{1,2,3,4\}$, respectively, and as the resulting probability distribution is highly symmetric, it can be fully described by only three cases: all outcomes are equal,
exactly two outcomes are equal,
or all outcomes are different.
The distribution can be described as follows: for all $a,b,c\in\{1,2,3,4\}$,
\begin{equation*}
p(a,b,c) = \left\{\begin{aligned}
    \displaystyle\frac{25}{256} &\hspace{10pt}\text{if } a=b=c, \\
    \displaystyle\frac{5}{256} &\hspace{10pt}\text{if } a\neq b \neq c\neq a, \\
    \displaystyle\frac{1}{256} &\hspace{10pt}\text{else.} 
\end{aligned}\right.
\end{equation*}
Although this distribution has strong correlations, i.e., a large probability ${p(A=B=C)}$, one could also find classical models with even higher correlations (see the example in \cref{sec:highcorr}). 
It is however conjectured that this specific distribution is nonlocal due to its additional high degree of symmetry~\cite{EJM2019}, i.e., that the distribution cannot be written as \cref{trilocal}.
In addition, unlike previous nonlocal quantum correlations, the EJM's nonlocality is also conjectured to be noise-robust~\cite{krivachy_neural_2020}.
This served as another motivation to investigate fully symmetric distributions in the triangle network.

\subsection{Fully symmetric distributions}
\label{sec:fully symmetric distributions}

We now focus on the case of four outcomes per party, since this is the case for the EJM distribution of interest.
Let us first define what we mean by \textit{fully symmetric distributions}: 
a distribution $p$ if fully symmetric if it is symmetric under permutation of the parties as well as under joint permutation of the outcomes (note that distributions which are symmetric under joint permutations of the outcomes were referred to as output-permutation-invariant (OPI) distributions in~\cite{girardin_violation_2023}).
This implies that a fully symmetric distribution $p$ can be characterized by only three values, 
\begin{align}
\label{eq:def s111}
	\saaa &= p(A=B=C) 
        = 4\,p(1,1,1),\nonumber\\
	\saab &= p(A=B\neq C) + p(A=C\neq B) + p(B=C\neq A) \nonumber\\
        &= 36\,p(1,1,2), \nonumber\\
	\sabc &= p(A\neq B\neq C \neq A)
        = 24\,p(1,2,3).
\end{align}
In fact, the normalization $\saaa + \saab + \sabc = 1$ implies that only two values could be used.
Additionally, let us define the three extremal fully symmetric distributions, $\paaa$, $\paab$ and $\pabc$ as follows:
\begin{align*}
    \paaa &= \frac{1}{4}\sum_{k} [k,k,k], \\
    \paab &= \frac{1}{36}\sum_{k\neq l} [k,k,l] + [k,l,k] + [l,k,k], \\
    \pabc &= \frac{1}{24}\sum_{k\neq l\neq m\neq k} [k,l,m].
\end{align*}
With the above definitions, we have that any fully symmetric distribution $p$ can be written as
\begin{align*}
    p = \saaa \paaa + \saab \paab + \sabc \pabc,
\end{align*}
or in other words, for all $a,b,c\in\{1,2,3,4\}$,
\begin{multline*}
    p(a,b,c) = \saaa\paaa(a,b,c) + \saab\paab(a,b,c) \\
    + \sabc\pabc(a,b,c).
\end{multline*}
For instance, the EJM (\cref{sec:EJM}) is characterized by
\begin{align*}
    (\saaa,\saab,\sabc) = \left(\frac{25}{64}, \frac{9}{64}, \frac{30}{64} \right).
\end{align*}
The Finner inequality \cite{Finner1992} states that both local and quantum strategies in the triangle network lead to distributions satisfying
\begin{align}
\label{eq:general finner}
    p(a,b,c) \leq \sqrt{p(A=a)p(B=b)p(C=c)}.
\end{align}
In the special case of fully symmetric distributions, the marginals are maximally mixed, since by symmetry we must have $p(A=a) = p(A=\sigma(a))$ for all $\sigma \in S_4$, and similarly for the other marginals.
Hence, \cref{eq:general finner} simplifies to
\begin{align*}
    p(a,b,c) \leq \frac{1}{8},
\end{align*}
which implies nothing for $\saab$ and $\sabc$ (since it only states that $\saab \leq \sfrac{36}{8}$ and $\sabc \leq \sfrac{24}{8}$), but it implies that
\begin{align}
\label{eq:finner four outcomes}
    \saaa \leq \frac{1}{2}
\end{align}
for all local and quantum distributions~\cite{Renou_quantumFinner_2019}. 
Note that the Finner inequality may not be valid in physical theories that merely satisfy the non-signaling principle, as investigated in~\cite{girardin_violation_2023}.

\section{Analytical construction of local models}
\label{sec:analytical flags}

In this section, we describe analytical local models leading to fully symmetric distributions.
In the context of trying to find a local model for the EJM distribution, a similar kind of parametrized local model was proposed in~\cite{gisin_published_ejm}.
The resulting local distributions are fully symmetric: the results of~\cite{gisin_published_ejm} prove that for all $t \in [0,1]$, the fully symmetric distribution described by
$$
    (\saaa,\saab,\sabc) = \left( \frac{52 + 9t}{256}, \frac{180 + 9t}{256}, \frac{24 - 18t}{256}
    \right)
$$
is local.
This family of distributions can be seen as a special case of the more general constructions that we describe in the following.

\subsection{Description of the constructions}

To obtain an inner approximation of the set of local distributions within the symmetric subspace, we construct analytical local models that give rise to a fully symmetric distribution $p(A,B,C)$.
We do so using the flag model introduced in \cref{sec:flags}.
The first step in doing so was to devise a method for generating flags that yield an outcome-symmetric distribution.
This method is described in \cref{sec:defining outcome-symmetric local models,sec:generating outcome-symmetric local models}.
It essentially relies on ensuring that outcome permutations can be ``cancelled'' by a suitable permutation of the values of the three classical sources $\alpha,\beta,\gamma$.
The flags that we construct typically come with a few analytical parameters, such as $q$ and $\nu$ in \cref{fig:symmmaxcorr}.
We could then compute the output distribution of such flags in terms of those parameters, and enforce party symmetry by imposing suitable relations between those parameters.
The explicit constructions are described in \cref{sec:high_SYM_corr,app:genflags,app:anticorrflags}.
As an example, we depict in \cref{fig:symmmaxcorr} flags that yield a highly correlated and fully symmetric distribution (upon imposing the appropriate relation on $q$ and $\nu$, see \cref{sec:high_SYM_corr}).

\begin{figure}[htb]
    \centering
    \includegraphics{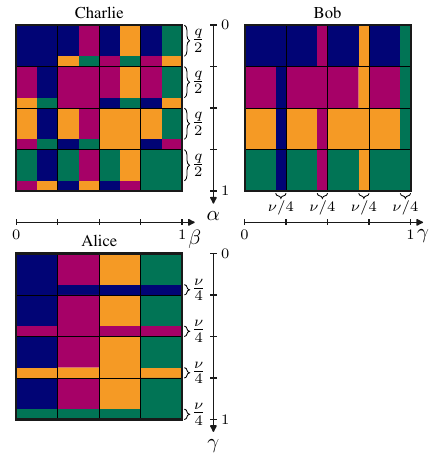}
    \caption{
        Flags illustrating response functions $a(\beta,\gamma)$, $b(\alpha,\gamma)$ and $c(\alpha,\beta)$ that yield the maximal three-party correlation $p(A=B=C)=\sfrac{1}{4}$ that a fully-symmetric local distribution can achieve within the analytical inner approximation that we considered.
        The flags are parameterized by $q \in [0,\sfrac12]$ and $\nu\in[0,1]$, and they give rise to the distribution of \cref{generallambda} with $r = \eta = 1$ if $\nu\in[0,\sfrac13]$ and $q = \sfrac{\nu}{(1-\nu)}$.
        See \cref{sec:high_SYM_corr} for more details on this construction.
    }
    \label{fig:symmmaxcorr}
\end{figure}

The most general family of flags that we found, described in \cref{app:genflags}, comes with three parameters: $r,\eta \in [0,1]$ and $\nu \in [0,\sfrac12]$, satisfying the relation
$$
    0 \leq \frac{1-r}{3}+\frac{\nu}{1-\nu} \frac{4 \eta -1}{3} \leq \frac12.
$$
They yield the following family of fully symmetric distributions:
\begin{align}
	\saaa &= \frac{1}{4} \big((1-\nu)r+\eta \nu\big) \nonumber\\
	\saab &= \frac{3}{4} \big((1-\nu) (1-r) + 3 \eta \nu\big) \nonumber\\
	\sabc &= \frac{1}{4} \big( 1+(1-\nu) 2r + (3-10 \eta )\nu\big).
	\label{generallambda}
\end{align}

In a different construction, presented in \cref{app:anticorrflags}, we allowed Alice, Bob and Charlie to further anti-correlate their strategies, achieving distributions outside of the family of distributions described above.
This results in a line that can be described by
\begin{align}
(\saaa, \saab, \sabc)= \Big(\frac{r}{48}, \frac{4-r}{16},\frac{18+r}{24}\Big),
\label{eq:bottomrightline}
\end{align}
for $r\in[0,1]$.
As described in \cref{app:currents}, we can extend this line of feasible distributions into a more general family of distributions by applying a generic ``decorrelation'' procedure in which the players and the sources modify their behaviors with a certain probability: this yields the little two-dimensional ``spike'' at the bottom right of \cref{fig:triangleanalytic}.

\subsection{Visualizing the local set}

To visualize the set of fully symmetric local models, we can plot all combinations $(\saaa, \saab, \sabc)$ in a triangle, where the three corners correspond to the three extremal points $\saaa=1$ (top), $\saab=1$ (bottom left) and $\sabc=1$ (bottom right), respectively. 
All other points are convex combinations and lie inside of the triangle, or on an edge if one of the coefficient $(\saaa,\saab,\sabc)$ is $0$.
In \cref{table:lines}, we describe some segments of local distributions in the fully symmetric subspace that can be obtained as special cases of \cref{generallambda}: some of these visibly lie on the boundary of the local region implied by \cref{generallambda}.
The local distributions of \cref{generallambda,eq:bottomrightline} are plotted in \cref{fig:triangleanalytic}.

\newcommand{\mybegline}{\rule{0pt}{16pt}}
\newcommand{\myendline}{\\[7pt]}
\newcommand{\writepoint}[1]{$\displaystyle\left(#1\right)$}
\begin{table}[ht]
\begin{tabular}{|c|cc|c|}
\hline
{\multirow{2}{*}{Fixed values}} & \multicolumn{2}{c|}{Line $(\saaa,\saab,\sabc)$} & Color in \\
 & From & To & \cref{fig:triangleanalytic} \\
\hline\hline
\mybegline $\eta = 1, \ r=1$ & \writepoint{\frac14,\frac34,0} & \writepoint{\frac{1}{4},0,\frac{3}{4}} & \textcolor{fr}{purple} \myendline
\hline
\mybegline $\eta = 1, r= \displaystyle\frac{7\nu-1}{2(1-\nu)}$ & \writepoint{\frac{1}{28},\frac{27}{28},0} & \writepoint{\frac{1}{4},\frac{3}{4},0} & \textcolor{red}{red} \myendline
\hline
\mybegline $\eta = 1, \ r=0$ & \writepoint{0,\frac{3}{4},\frac{1}{4}} & \writepoint{\frac{1}{28},\frac{27}{28},0} & \textcolor{black!80}{grey} \myendline
\hline
\mybegline $\eta = 0, \displaystyle r=\frac{1-2\nu}{1-\nu}$ & \writepoint{0,\frac{3}{8},\frac{5}{8}} & \writepoint{\frac{1}{4},0,\frac{3}{4}} & \textcolor{fg}{dark green} \myendline
\hline
\mybegline $\eta = 0, \ r=0$ & \writepoint{0,\frac{3}{4},\frac{1}{4}} & \writepoint{0,\frac{3}{8},\frac{5}{8}} & \textcolor{blue}{light blue} \myendline
\hline
\mybegline $\nu = 0$ & \writepoint{0,\frac{3}{4},\frac{1}{4}} & \writepoint{\frac14,0,\frac{3}{4}} & \textcolor{fb}{dark blue} \myendline
\hline
\end{tabular}
\caption{We fix some of the parameters of \cref{generallambda} to obtain a few key lines, which are also represented in \cref{fig:triangleanalytic}.}
\label{table:lines}
\end{table}

\begin{figure}[htb]
	\centering
        \includegraphics{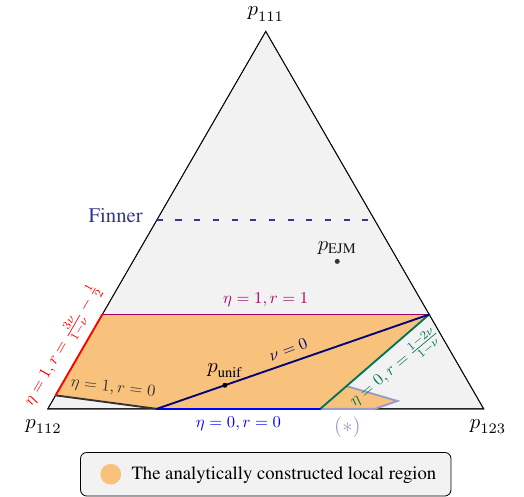}
	\caption{The set of fully symmetric local distributions given by \cref{generallambda}.
    The respective lines are indicated explicitly in \cref{table:lines}. Additionally, we plot the construction described around \cref{eq:bottomrightline}: it is indicated by the \textcolor{fb!40}{$(*)$} marker.
    We also indicate the Finner inequality, see \cref{eq:finner four outcomes}.
    The fact that the EJM distribution $\pejm$ (see \cref{sec:EJM}) is quite far from all the local distributions we found supports the conjecture that it is nonlocal.}
	\label{fig:triangleanalytic}
\end{figure}

From the construction in \cref{app:genflags} we can see that the cases with $\eta = 1$, which are at the top left of the set, describe the strategies where Alice and Bob use their common source to maximally correlate, while for the cases with $\eta=0$, which are on the bottom right of the set, they use it to maximally anti-correlate.
Note that the case $\eta = 1$, $r= 1$, which corresponds to a horizontal line at $\saaa = \sfrac{1}{4}$, corresponds to the maximally correlated flags constructed in \cref{sec:high_SYM_corr} and shown in \cref{fig:symmmaxcorr}.
The case of $\nu=0$, dividing the cases with $\eta = 1$ and $\eta = 0$, describes the strategies where Alice and Bob do not use their common source at all. 
The resulting distributions could be obtained already in the bilocal network, which corresponds to the triangle network but without the source between Alice and Bob.
These include the maximally mixed distribution, where $p(a,b,c) = \sfrac{1}{64}$, yielding $(\saaa,\saab,\sabc) = (\sfrac{1}{16}, \sfrac{9}{16},\sfrac{6}{16})$, which can be seen as the most trivial symmetric strategy, as it can be reached by every party just randomly outputting each of the four outcomes with equal probability.

\section{Neural Network Approaches}
\label{sec:neural}
Numerical searches of local models in the triangle network are difficult, as the problem is non-convex and one often ends up in local minima. 
However, modeling local response functions with artificial neural networks has been shown to be a relatively reliable heuristic, reproducing benchmark results, as well as providing new conjectures, which have since been partially proven~\cite{krivachy_neural_2020,pozas-kerstjens_proofs_2023}. 
In general, a feed-forward artificial neural network is a numeric model for a multivariate, multidimensional function. It can be trained, i.e., its parameters can be fit, in order to minimize an objective function that depends on the neural network's outputs in a differentiable manner.

\subsection{Minimizing distance to target distributions}

For local models in networks, one can model each of the response functions in \cref{trilocal} with neural networks, i.e., the neural network for Alice would take as inputs some $\beta_i$ and $\gamma_i$ values and output (a normalized, not necessarily deterministic) $p_{A}^{\text{NN}}(a|\beta_i,\gamma_i)\in\mathbb{R}^4$, and similarly for Bob and Charlie (this architecture is often referred to as LHV-Net). 
An example of response functions that LHV-Net ``thought about'' is displayed in \cref{fig:nn flags}.
Sampling over $M\gg1$ triples $(\alpha_i,\beta_i,\gamma_i)\in[0,1]^{\times3}$, one can then numerically calculate a Monte Carlo estimate of \cref{trilocal},
\begin{equation*}
    p_{\text{NN}} = \frac{1}{M}\sum_{i=1}^{M} p_{A}^{\text{NN}}(a|\beta_i,\gamma_i) p_{B}^{\text{NN}}(b|\gamma_i, \alpha_i) p_{C}^{\text{NN}}(c|\alpha_i,\beta_i).
\end{equation*}
Crucially, the neural networks of Alice, Bob, and Charlie only have access to the respective hidden variables allowed by the triangle structure, thus \textit{any} distribution given by LHV-Net is by definition local. The three neural networks are then jointly optimized by minimizing the objective function $||p_{\text{target}}-p_{\text{NN}}||^2$. In our case, we use a multilayer perceptron of depth 4 and width 30 with rectified linear activation functions for each party, with an Adadelta optimizer, and stochastic gradient descent used for fine-tuning the weights. We designated 260 target distributions in the symmetric subspace, and trained the neural network 10 times independently for each of them. Finally, we kept the best resulting model for each point. 

\begin{figure}[tb]
    \centering
    \includegraphics{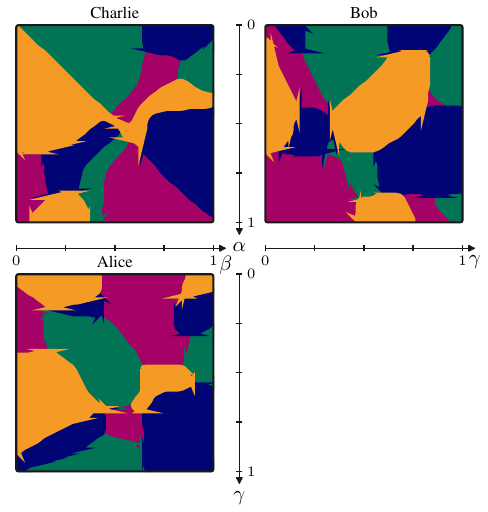}
    \caption{Illutration made from some of the response functions that the neural network came up with. To obtain this figure, we turned the probabilistic reponse functions into deterministic ones by picking the most likely outcome for each input values.
    See \cref{app:saaa>0.25} for more details about these specific flags.}
    \label{fig:nn flags}
\end{figure}

The results are displayed in \cref{fig:tamas}.
It is important to note that the neural network's output distributions are not forced to be symmetric, i.e., they are not necessarily actually within the fully symmetric subspace. However, those that are close in 2-norm to their respective (fully symmetric) targets are naturally close to being symmetric. 
Hence, the resulting dark blue region portrayed in \cref{fig:tamas} gives a good indication of a region that is local in the fully symmetric subspace.
Further plots of scans of the symmetric subspace for $N$ outcomes with $N=3,5,6$ can be found in \cref{app:generalN}.

\begin{figure}[htb]
	\centering
        \includegraphics{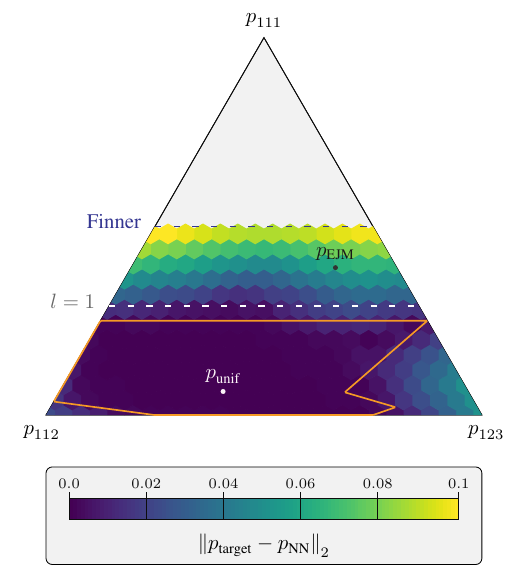}
	\caption{Results from the neural network trained to find local strategies yielding a distribution close to each point inside of the fully symmetric subspace. The color indicates the distance to the actual distribution, which should be very small (i.e., dark blue) in order to consider a local strategy as ``detected''. The orange line delimits the local region described in \cref{fig:triangleanalytic} and is drawn to emphasize the good agreement between the analytical local models we found and what the neural network perceives as local.
    We also indicate the Finner inequality, see \cref{eq:finner four outcomes}, as well as the $l=1$ conjectured Bell inequality of \cref{eq:ineq_l1}.
 }
	\label{fig:tamas}
\end{figure}

\subsection{Finding inequalities}
\label{sec:finding inequalities}

In general causal scenarios, Bell-type inequalities are those which are satisfied by correlations from local models, and hopefully violated by some quantum correlations. 
Obtaining such inequalities for the triangle network has proven difficult, with previous attempts not finding any genuine quantum violations or resulting in difficult-to-interpret inequalities~\cite{Henson_2014,Frase_2018,Weilenmann_2018_nonshannon}.

Regarding the symmetric subspace, our previous analytic and numeric findings give strong evidence that \textit{within} the symmetric subspace the $\saaa$ of local models is limited  by some value $\saaa^*$, with $\saaa^* \approx \sfrac14$, allowing for a simple inequality that would rule out the Elegant distribution. However, \textit{outside} this subspace, local models can reach high (in fact, maximal) $\saaa$ values. Formalizing this intuition (that it is difficult for local models to simultaneously have large $\saaa$ and be symmetric) as an inequality for generic distributions is a priori difficult. 

By changing the objective function of LHV-Net, we can test different ans\"atze for Bell-type inequalities for the triangle network. We do this by penalizing asymmetry with a penalty term on each of the three types of probabilities appearing in symmetric distributions (111, 112, 123). Specifically, we sum up the joint deviation from the mean for each of these types of probabilities, 
\begin{align*}
\Delta_{l} &= \sum_{X \in \{111,112,123\}} \Delta_{l,X},\\
\Delta_{l, X} &= \sum_{\{a,b,c\} \in \mathcal{I}_X} |M_X - p(a,b,c)|^l,\\
M_{X} &= \frac{1}{|\mathcal{I}_X|} \sum_{\{a,b,c\} \in \mathcal{I}_X} p(a,b,c),
\end{align*}
where $\mathcal{I}_X$ is the index set of $X$-type outcomes, and $l\in\{1,2\}$. In particular $\mathcal{I}_{111}$ will contain 4 elements, $\mathcal{I}_{112}$ 36, and $\mathcal{I}_{123}$ 24 elements.

Consider now the following function:
\begin{align*}
    f_w(p) = w \cdot \saaa(p) - (1-w) \Delta_{l}(p).
\end{align*}
By maximizing this quantity for LHV models, we can see whether it can outperform $f_w(\pejm)$ for any value of $w$. 
Intuitively, we are trying to maximize $\saaa$ (with weight $w$) and minimize the penalty $\Delta_{l}$ (with weight $1-w$). 
We define the gap
\begin{align*}
    \delta_w:=f_w(\pejm) - \max_{p\in\mathcal{L}} f_w(p),
\end{align*}
which, if positive, defines an inequality
\begin{align}
\label{eq:ineq:basic}
    f_w(p) \leq f_w(\pejm) - \delta_w,
\end{align}
which all local distributions must obey.

Finding the exact value of $\delta_w$ is difficult, as one must optimize over local models of the form of \cref{trilocal}. However, using LHV-Net, one can obtain an estimate of $\delta_w$, by setting the objective function to be $-f_w(p)$.
For each $w$ value, we train the neural network from scratch to try to violate that given inequality and plot the resulting $\delta_w$ values in \cref{app:inequality_params}, for both absolute value and square penalty ($l=1,2$). We find the largest $\delta_w$ for $l=1$ ($l=2$) at $w^*\approx 0.678$ ($w^*\approx 0.161$) with $\delta_{w^*}\approx 0.069$ ($\delta_{w^*}\approx 0.012$), getting that approximately
\begin{align}
\label{eq:ineq_l1}
    \saaa(p) - 0.475 \,\Delta_{l=1}(p) &\leq 0.289, \\
\label{eq:ineq_l2}
    \saaa(p) - 5.211 \,\Delta_{l=2}(p) &\leq 0.316,
\end{align}
should both hold for all local models. Recently, the inequality with $l=2$ has been violated for a range of $w$ values by experimentally obtained data~\cite{wang2024experimental}.
In the symmetric subspace, the penalty term vanishes, which means that \cref{eq:ineq_l1} is the most constraining of the two inequalities.
This is the inequality that we plotted in \cref{fig:tamas}.

Interestingly, when maximizing these inequalities, the neural network seems to find something better than the $\saaa=\sfrac{1}{4}$ strategy that was found analytically. 
This apparent out-performance of the $\sfrac14$ bound also appears in \cref{fig:tamas}, most prominently at values where $p(112) \approx p(123)$. 
In \cref{app:inequality_params} we include details about the local strategy that the neural network found with $\saaa\approx 0.289$ ($\Delta_{l=1} \approx 0.009$, $\Delta_{l=2} \approx 2.3 \cdot 10^{-6}$), and its discretized, deterministic approximation, which is displayed in \cref{fig:nn flags}, with $\saaa\approx 0.294$ ($\Delta_{l=1} \approx 0.014$, $\Delta_{l=2} \approx 4.7 \cdot 10^{-6}$). 
It is currently an open question whether there exists an exactly symmetric local distribution with $\saaa>\sfrac14$ (e.g.~$\saaa = 0.289$ as implied by the inequality), or whether these distributions only exist very close to the symmetric subspace. 
Note that in the case of 3 outcomes per party, it is possible to find a fully symmetric local distribution with $\saaa > \sfrac13$ as shown in \cref{app:3outcomesExample}.

Finally, one could try a variety of different penalty functions for constructing inequalities. 
However, several simple ones that do not work are using only $\Delta_{l} = \Delta_{l,111}$ or when imposing symmetry only at the level of the single-party marginals. 
Both of these types of penalty functions have the local distribution in \cref{eq:high111local} as an example of why such penalty functions would not work: this local distribution would get zero penalty, however, its $\saaa$ is larger than that of the EJM distribution.

\section{Conclusion}
\label{sec:conclusion}

The high level of symmetry in the correlations obtained from performing the Elegant Joint Measurement \cite{gisin_published_ejm} in the triangle network, which are conjectured to lead to noise-robust nonlocal quantum correlations, inspired us to investigate fully symmetric distributions in this setting, i.e., distributions that are symmetric under permutation of the parties and under joint permutations of the outcomes. 

We analytically constructed classical local model and applied neural network techniques to substantiate our findings.
The agreement between the two methods is best witnessed in \cref{fig:tamas}.
Both methods are fundamentally inner approximations, i.e., they can only certify that a given distribution \emph{is} local.
However, the good agreement between the two methods suggests that, in this case, these methods are essentially able to find a local model for a distribution if the distribution is local.
Of course, the exact location of the boundary of the local set is still hard to pinpoint, but in any case, this boundary seems fairly far away from the Elegant Joint Measurement distribution.
Moreover, we formalized the trade-off that local models face between being highly correlated and highly symmetric via the conjectured Bell inequalities (see \cref{sec:finding inequalities}).

This led to the conjecture that local models yielding a fully symmetric outcome distribution with four outcomes per party have a maximal correlation very close to $p(A=B=C)=\sfrac{1}{4}$. 
While the $\sfrac{1}{4}$ maximum value is what we found with our analytical construction, the neural network approach supports that indeed the correlation cannot be much higher, but also indicates that the true value might be slightly above $\sfrac{1}{4}$. 
\begin{openproblem*}
    Does there exist a distribution $p(A,B,C)$ that is local in the triangle network and fully symmetric such that $p(A=B=C) > \sfrac14$ ?
\end{openproblem*}
However, as the EJM is well above that bound, even a slightly higher upper bound would still imply that it leads to noise-robust nonlocal quantum correlations. 
For $N\geq3$ outcomes per party, we can construct strategies with $p(A=B=C)=\sfrac{1}{N}$, and let neural networks substantiate that the local upper bound must be close to $\sfrac1N$ (see \cref{app:generalN}).
However, in the $N=3$ outcome case, we found an analytical local model that gives rise to $\saaa > \sfrac13$ (see \cref{app:3outcomesExample}), but were not able to generalize this to $N\geq4$ outcomes.

It still remains open to find a proof that the Elegant Joint Measurement is nonlocal in the triangle network.
Ideally, it would be even better to obtain a more general bound on the maximal correlations for local models like our suggested conjecture, or an even more general network Bell inequality that can be used to identify nonlocality irrespectively of symmetry.
In parallel, it would be useful to develop similar inner approximation tools for the set of quantum distributions in the triangle network.
Such an exploration could in particular provide additional examples of quantum nonlocality on top of the few examples that are known today.

\section{Data availibility}
A data appendix is available at Ref.~\cite{data_appendix}. The following are included.
\begin{itemize}
    \item The ($w,\delta_w$) pairs that were found by LHV-Net for \cref{eq:ineq:basic} (both for $l=1$ and $l=2$).
    \item The almost symmetric distribution found by LHV-Net that has $\saaa>\sfrac14$, as well as the corresponding (discretized, deterministic) flags that generate it.
    \item Data used in the LHV-Net maps of the symmetric subspace for $N=3,4,5,6$ (\cref{fig:tamas,fig:nn cards}).
    \item Accompanying scripts to load and evaluate this data.
    \item A Wolfram Mathematica script that displays the different analytical flags that we used in \cref{sec:analytical flags}, and computes the associated outcome distributions.
    \item A Python script that displays the inner approximation of \cref{fig:triangleanalytic}.
\end{itemize}

\section{Acknowledgements}

This work was supported by the Swiss National Science Foundation (NCCR SwissMAP, as well as project No.\ 200021\_188541 and P1GEP2\_199676 (TK)) and by the Air Force Office of Scientific Research via grant FA9550-19-1-0202. TK was additionally funded by the European Research Council (Consolidator grant ’Cocoquest’ 101043705) and the Austrian Federal Ministry of Education via the Austrian Research Promotion Agency–FFG (flagship project FO999897481, funded by EU program NextGenerationEU). Most of the neural network computations were performed at University of Geneva on ``Baobab'' HPC cluster.



\bibliography{bibliography}


\appendix
\onecolumngrid

\section{General flag constructions}

In this section, we describe the general four-outcome flag models that underlie \cref{sec:analytical flags}.

\subsection{Two-party marginals of the Elegant Joint Measurement}

We start by noting that the Elegant Joint Measurement has local two-party marginals in the following sense.
In \cref{fig:twoparty}, we provide an explicit strategy such that all the two-party marginals are equal to those of the Elegant Joint Measurement, i.e.,
\begin{align*}
	p(A = k,B = l) &= p(A = k,C = l) = p(B = k,C = l)
	=
	\left\{\begin{aligned}
		\displaystyle\frac{7}{64} &\hspace{10pt} \text{if } k=l,\\
		\displaystyle \frac{3}{64} &\hspace{10pt} \text{if $k\neq l$}.
	\end{aligned}\right.
\end{align*}
However, the full distribution $p(A,B,C)$ obtained from this strategy is not equal to that of the Elegant Joint Measurement.
In particular, the three-party correlations are quite small, with
\begin{align}
    p(A = B = C = k) = \frac{1}{16}.
\end{align}
Additionally, we note that while the resulting three-party-distribution is symmetric under cyclic permutations of the parties, it is not symmetric under permutation of the outcomes.
\begin{figure}[htb]
	\centering
        \includegraphics{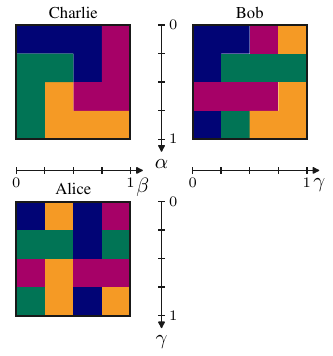}
	\caption{Flags illustrating response functions $a(\beta,\gamma)$, $b(\alpha,\gamma)$ and $c(\alpha,\beta)$ that yield the same two-party marginals as the EJM, but not the right three-party probabilities.
	}
	\label{fig:twoparty}
\end{figure}

\subsection{Defining outcome-symmetric local models}
\label{sec:defining outcome-symmetric local models}

As we would like to characterize the set of fully symmetric distributions with four outcomes in the triangle network that admit a local model, we start by analytically constructing such local models.
A straightforward but potentially limiting way to construct fully symmetric flags is to allow each party only strategies that are symmetric under permutation of the outputs. 
Let us now describe such strategies formally.
\begin{figure}[htb]
    \centering
    \includegraphics{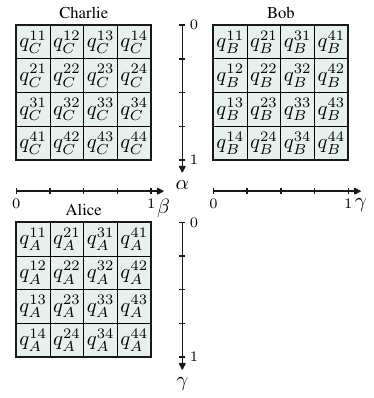}
    \caption{We divide the range of each source into four equal partitions which leads to a $4\times 4$ grid of $16$ strategies per party that we label by $q_A^{jk}$, $q_B^{ki}$ and $q_C^{ij}$, respectively.}
    \label{fig:symmtemplate}
\end{figure}
Let 
\begin{align*}
    q_A^{jk} : \{1,2,3,4\} \times [0,1]^{\times 2} \to [0,1]
\end{align*}
be Alice's ``sub-response-functions'', for $j,k \in \{1,2,3,4\}$.
Let $I_j = [(j-1)/4,j/4[$ for $j \in \{1,2,3,4\}$.
The full response function of Alice is defined as a piecewise function: for all $\beta,\gamma\in[0,1]$, with $j,k$ be such that $\beta \in I_j$ and $\gamma \in I_k$,
\begin{align*}
    p_A(a|\beta,\gamma) = q_A^{jk}(a|4\beta - j + 1,4\gamma - k + 1).
\end{align*}
We now require the following: for all $\sigma \in S_4$ (the group of permutations of $\{1,2,3,4\}$),
\begin{align}
\label{eq:constraint local models}
    q_A^{\sigma(j)\sigma(k)}(\sigma(a)|\beta,\gamma) = q_A^{jk}(a|\beta,\gamma).
\end{align}
We repeat this construction for Bob's and Charlie's response functions, with associated sub-response-functions $q_B^{kl}$ and $q_C^{li}$ satisfying a constraint analogous to \cref{eq:constraint local models}.
A straightforward calculation shows that the output distribution of such a local model is
\begin{align*}
    p(a,b,c) = \frac{1}{4^3} \sum_{i,j,k=1}^4 p^{ijk}(a,b,c),
\end{align*}
where
\begin{align*}
    p^{ijk}(a,b,c) =
    \int_{[0,1]^{\times3}} \dd\alpha\dd\beta\dd\gamma q_A^{jk}(a|\beta,\gamma) q_B^{ki}(b|\gamma,\alpha) q_C^{ij}(c|\alpha,\beta).
\end{align*}
The fact that $p(\sigma(a),\sigma(b),\sigma(c)) = p(a,b,c)$, for all $a,b,c\in\{1,2,3,4\}$ and $\sigma \in S_4$, follows directly from the observation that \cref{eq:constraint local models} implies
\begin{align*}
    p^{ijk}(\sigma(a),\sigma(b),\sigma(c)) = p^{\sigma^{-1}(i)\sigma^{-1}(j)\sigma^{-1}(k)}(a,b,c).
\end{align*}
Thus, this construction yields distributions that are symmetric under permutation of the outputs, but not yet necessarily under permutation of the parties.
We give more details on how to construct such families of response functions in \cref{sec:generating outcome-symmetric local models}.

\subsection{Generating outcome-symmetric local models}
\label{sec:generating outcome-symmetric local models}

In this section, we describe how to generate flags that satisfy the constraints described in \cref{sec:defining outcome-symmetric local models}.
Let $q_A^{11}(a|\beta,\gamma)$ be an arbitrary sub-response-function that satisfies the constraint
\begin{align}
\label{eq:constraint q11}
    q_A^{11}(2|\beta,\gamma) = q_A^{11}(3|\beta,\gamma) = q_A^{11}(4|\beta,\gamma).
\end{align}
Let $\tau_j\in S_4$ be the transposition $1\leftrightarrow j$ (note that $\tau_j^{-1} = \tau_j$ and $\tau_1$ is the identity).
Then, define $q_A^{jj}$ for $j\in\{2,3,4\}$ as follows:
\begin{align}
\label{eq:def qjj}
    q_A^{jj}(a|\beta,\gamma) = q_A^{11}(\tau_j(a)|\beta,\gamma).
\end{align}
Notice that \cref{eq:def qjj} also holds trivially for $j = 1$.
Let $q_A^{12}(a|\beta,\gamma)$ be an arbitrary sub-response-function that satisfies the constraint
\begin{align}
\label{eq:constraint q12}
    q_A^{12}(3|\beta,\gamma) = q_A^{12}(4|\beta,\gamma).
\end{align}
For all $j,k \in \{1,2,3,4\}$, $j \neq k$, define $\pi_{jk} \in S_4$ to be the permutation such that $\pi_{jk}(1) = j$, $\pi_{jk}(2) = k$ (and complete it arbitrarily --- \cref{eq:def qjk} will still be well-defined thanks to \cref{eq:constraint q12}).
Define $q_A^{jk}$ for $j\neq k$, $(j,k) \neq (1,2)$ as follows:
\begin{align}
\label{eq:def qjk}
    q_A^{jk}(a|\beta,\gamma) = q_A^{12}(\pi_{jk}^{-1}(a)|\beta,\gamma).
\end{align}
Notice that \cref{eq:def qjk} also holds trivially for $(j,k) = (1,2)$.
\begin{lemma}
    Any such family $\{q_A^{jk}\}_{j,k=1}^4$ satisfies the constraint of \cref{eq:constraint local models}.
\end{lemma}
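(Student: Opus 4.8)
The plan is to verify the identity \cref{eq:constraint local models} directly from the recursive definitions \cref{eq:def qjj,eq:def qjk}, splitting into the diagonal case $j=k$ and the off-diagonal case $j\neq k$ (these are exhaustive since $\sigma$ preserves the relation $j=k$). The underlying mechanism is that each of the two ``seed'' functions is invariant under the appropriate stabilizer subgroup of $S_4$: constraint \cref{eq:constraint q11} is exactly the statement that $q_A^{11}(\rho(a)|\beta,\gamma)=q_A^{11}(a|\beta,\gamma)$ for every $\rho\in S_4$ fixing $1$ (if $a=1$ then $\rho(a)=1$; if $a\in\{2,3,4\}$ then $\rho(a)\in\{2,3,4\}$, on which $q_A^{11}$ is constant), and likewise \cref{eq:constraint q12} says $q_A^{12}(\rho(a)|\beta,\gamma)=q_A^{12}(a|\beta,\gamma)$ for every $\rho\in S_4$ fixing both $1$ and $2$. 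Throughout I suppress the arguments $\beta,\gamma$.

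For the diagonal case, fix $\sigma\in S_4$ and $j$ and put $k=j$. Applying \cref{eq:def qjj} to each side, the desired identity $q_A^{\sigma(j)\sigma(j)}(\sigma(a))=q_A^{jj}(a)$ becomes $q_A^{11}(\tau_{\sigma(j)}\sigma(a))=q_A^{11}(\tau_j(a))$. Writing $b=\tau_j(a)$ and using $\tau_j^{-1}=\tau_j$, this is equivalent to $q_A^{11}(\rho(b))=q_A^{11}(b)$ for all $b$, where $\rho:=\tau_{\sigma(j)}\circ\sigma\circ\tau_j$. Since $\tau_j(1)=j$, then $\sigma(j)$, then $\tau_{\sigma(j)}(\sigma(j))=1$, the permutation $\rho$ fixes $1$, so the invariance of $q_A^{11}$ noted above settles this case.

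For the off-diagonal case, fix $\sigma\in S_4$ and $j\neq k$ (hence $\sigma(j)\neq\sigma(k)$). Applying \cref{eq:def qjk} to each side, the desired identity $q_A^{\sigma(j)\sigma(k)}(\sigma(a))=q_A^{jk}(a)$ becomes $q_A^{12}\bigl(\pi_{\sigma(j)\sigma(k)}^{-1}\sigma(a)\bigr)=q_A^{12}\bigl(\pi_{jk}^{-1}(a)\bigr)$. Writing $b=\pi_{jk}^{-1}(a)$, this is equivalent to $q_A^{12}(\rho(b))=q_A^{12}(b)$ for all $b$, where $\rho:=\pi_{\sigma(j)\sigma(k)}^{-1}\circ\sigma\circ\pi_{jk}$. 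By construction $\pi_{jk}(1)=j\mapsto\sigma(j)$ and $\pi_{\sigma(j)\sigma(k)}^{-1}(\sigma(j))=1$, and similarly $\pi_{jk}(2)=k\mapsto\sigma(k)$ and $\pi_{\sigma(j)\sigma(k)}^{-1}(\sigma(k))=2$, so $\rho$ fixes both $1$ and $2$; the invariance of $q_A^{12}$ then finishes the proof. The same argument applies verbatim to Bob's family $\{q_B^{kl}\}$ and Charlie's family $\{q_C^{li}\}$, whose seeds satisfy the analogues of \cref{eq:constraint q11,eq:constraint q12}.

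I do not expect a genuine obstacle here; the only delicate point is that $\pi_{jk}$ was only pinned down on $\{1,2\}$, with an arbitrary completion on $\{3,4\}$. But this ambiguity is harmless: $\pi_{jk}^{-1}(a)\in\{1,2\}$ precisely when $a\in\{j,k\}$, in which case its value is forced, and otherwise $\pi_{jk}^{-1}(a)\in\{3,4\}$, on which $q_A^{12}$ is constant by \cref{eq:constraint q12} --- so $q_A^{jk}$, and hence every step above, is independent of the choice of completion (the same comment covers $\pi_{\sigma(j)\sigma(k)}$). When writing the argument out one should also note that the substitutions $b=\tau_j(a)$ and $b=\pi_{jk}^{-1}(a)$ are genuine bijections of $\{1,2,3,4\}$, so that ``for all $a$'' and ``for all $b$'' are interchangeable.
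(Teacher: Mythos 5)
Your proof is correct and follows essentially the same route as the paper's: split into the diagonal case $j=k$ and the off-diagonal case $j\neq k$, unfold the definitions of $q_A^{jj}$ and $q_A^{jk}$, and reduce everything to the seed constraints on $q_A^{11}$ and $q_A^{12}$. Your repackaging of those constraints as invariance under the stabilizer of $1$ (respectively of $1$ and $2$ pointwise), combined with checking that the conjugated permutation $\rho=\tau_{\sigma(j)}\circ\sigma\circ\tau_j$ (resp.\ $\pi_{\sigma(j)\sigma(k)}^{-1}\circ\sigma\circ\pi_{jk}$) lies in that stabilizer, is a slightly tidier formulation of the paper's explicit case analysis via the equivalences $P\Leftrightarrow Q$ and $P_1\Leftrightarrow Q_1$, $P_2\Leftrightarrow Q_2$, and your closing remark on independence from the arbitrary completion of $\pi_{jk}$ matches the paper's parenthetical note.
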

\begin{proof}
    \underline{The case of $j = k$.}
    We have to show in this case that
    \begin{align}
        q_A^{jj}(\sigma(a)|\beta,\gamma) = q_A^{\sigma^{-1}(j)\sigma^{-1}(j)}(a|\beta,\gamma).
    \end{align}
    Using \cref{eq:def qjj}, this simplifies to
    \begin{align}
        q_A^{11}(\tau_j \circ \sigma(a)|\beta,\gamma) = q_A^{11}(\tau_{\sigma^{-1}(j)}(a)|\beta,\gamma).
    \end{align}
    Thanks to \cref{eq:constraint q11}, this follows if either 
    \begin{align}
        &(\tau_j\circ\sigma(a) = 1 \textup{ and } \tau_{\sigma^{-1}(j)}(a) = 1) \\
        \textup{or } 
        &(\tau_j\circ\sigma(a) \neq 1 \textup{ and } \tau_{\sigma^{-1}(j)}(a) \neq 1).
    \end{align}
    Let $P$ be the proposition $\tau_j\circ\sigma(a) = 1$ and $Q$ the proposition $\tau_{\sigma^{-1}(j)}(a) = 1$.
    We want to show $(P \land Q) \lor (\lnot P \land \lnot Q)$.
    This is trivially true if $P \Leftrightarrow Q$.
    This is exactly what happens: we have that $P$ is equivalent to $\sigma(a) = j$, while $Q$ is equivalent to $a = \sigma^{-1}(j)$, which is clearly equivalent to $P$.

    \underline{The case of $j \neq k$.}
    We have to show in this case that
    \begin{align}
        q_A^{jk}(\sigma(a)|\beta,\gamma) = q_A^{\sigma^{-1}(j)\sigma^{-1}(k)}(a|\beta,\gamma).
    \end{align}
    Using \cref{eq:def qjk}, this simplifies to
    \begin{align}
        q_A^{12}(\pi^{-1}_{jk}\circ \sigma(a)|\beta,\gamma)
        =
        q_A^{12}(\pi^{-1}_{\sigma^{-1}(j)\sigma^{-1}(k)}(a) |\beta,\gamma).
    \end{align}
    Thanks to \cref{eq:constraint q12}, this follows if either
    \begin{align}
        &(\pi^{-1}_{jk}\circ \sigma(a) = 1 \textup{ and } \pi^{-1}_{\sigma^{-1}(j)\sigma^{-1}(k)}(a) = 1) \\
        \textup{or }
        &(\pi^{-1}_{jk}\circ \sigma(a) = 2 \textup{ and } \pi^{-1}_{\sigma^{-1}(j)\sigma^{-1}(k)}(a) = 2) \\
        \textup{or }
        &(\pi^{-1}_{jk}\circ \sigma(a) \notin \{1,2\} \textup{ and } \pi^{-1}_{\sigma^{-1}(j)\sigma^{-1}(k)}(a) \notin \{1,2\}).
    \end{align}
    Let $P_x$ be the proposition $\pi^{-1}_{jk}\circ\sigma(a) = x$ and $Q_x$ be the proposition $\pi^{-1}_{\sigma^{-1}(j)\sigma^{-1}(k)}(a) = x$.
    We want to show
    \begin{align}
    \label{eq:long logical expression}
        (P_1 \land Q_1) \lor (P_2 \land Q_2) \lor \Big(\lnot(P_1\lor P_2) \land \lnot(Q_1\lor Q_2)\Big).
    \end{align} 
    Notice that $P_1 \Leftrightarrow Q_1$: indeed, $P_1$ is equivalent to $\sigma(a) = \pi_{jk}(1) = j$, i.e., $\sigma(a) = j$, while $Q_1$ is equivalent to $a = \pi_{\sigma^{-1}(j)\sigma^{-1}(k)}(1) = \sigma^{-1}(j)$, i.e., $a = \sigma^{-1}(j)$.
    Similarly, $P_2 \Leftrightarrow Q_2$.
    Thus, \cref{eq:long logical expression} simplifies to
    \begin{align}
        P_1 \lor P_2 \lor \lnot (P_1\lor P_2),
    \end{align}
    which is trivially true.
\end{proof}

\subsection{Maximizing the correlations}
\label{sec:high_SYM_corr}

In this section, we describe flags that satisfy the symmetry constraints of \cref{sec:defining outcome-symmetric local models} and that maximize the correlations $p(A=B=C)$.
Indeed, the combination of high symmetry and strong correlations seems to be one of the essential characteristics implying nonlocality.
We came up with the following parametrized construction, yielding the flags depicted in \cref{fig:symmmaxcorr} and reproduced in \cref{fig:symmmaxcorr2} for easier reference.

From Finner's inequality \cite{Finner1992} (see \cref{eq:general finner}), we know that in order to maximize the volume where $A=B=C$, we should aim at having unicolored ``cubes'' as large as possible. 
Assuming the grid for the symmetrized strategy as in \cref{fig:symmtemplate}, we can at most have such cubes of side length $\sfrac{1}{4}$. 
Thus, it seems that the best strategy would be for each party to output the outcome $k$ in the sub-response-function $q_A^{kk}$, $q_B^{kk}$ and $q_C^{kk}$, respectively.
Next, we need to maximize the correlation in the off-diagonal sub-response-functions $q_A^{jk}$, $q_B^{ki}$ and $q_C^{ij}$, respectively.
We start by considering Alice and Bob and allow them to take a fraction $\nu \in [0,\sfrac{1}{3}]$ (this upper bound on $\nu$ will be explained later) of each interval $I_k = [(k-1)/4,j/4[$ of their $\gamma$ input in which they perfectly correlate by just outputting the color $k$. 
To correlate with Charlie, they output $j$ and $i$ in their remaining parts of the sub-response-functions $q_A^{jk}$ and $q_B^{ki}$, respectively. 
Note that Alice's and Bob's strategies are the same up to a reflection and uniform among all $\alpha \in I_i$ and $\beta \in I_j$.
It remains to define Charlie's sub-response-functions $q_C^{ij}$ for $i \neq j$, which we can describe, thanks to the symmetries in Alice's and Bob's strategies, with just one variable $q \in [0,\sfrac12]$, such that 
$$
    \int_{[0,1]^{\times2}} \dd\alpha\dd\beta q_C^{ij}(i|\alpha,\beta) 
    = \int_{[0,1]^{\times2}} \dd\alpha\dd\beta q_C^{ij}(j|\alpha,\beta) 
    = q,
$$ 
and for all $k \notin \{i,j\}$,
$$
    \int_{[0,1]^{\times2}} \dd\alpha\dd\beta q_C^{ij}(k|\alpha,\beta) 
    = \frac{1}{2}-q.
$$
This yields the strategy given by the flags in \cref{fig:symmmaxcorr2}. 

\begin{figure}[htb]
	\centering
        \includegraphics{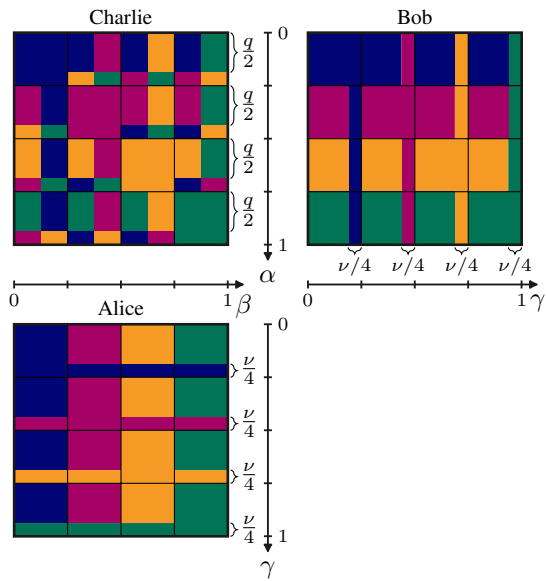}
	\caption{Flags illustrating response functions $a(\beta,\gamma)$, $b(\alpha,\gamma)$ and $c(\alpha,\beta)$ that yield the (conjectured to be) maximal three-party correlation $p(A=B=C)=\sfrac{1}{4}$ as a function of $\nu \in [0,\sfrac{1}{3}]$ and $q = \sfrac{\nu}{(1-\nu)}$.}
	\label{fig:symmmaxcorr2}
\end{figure}

Since the flags satisfy the constraints of \cref{sec:defining outcome-symmetric local models}, it is already guaranteed that the resulting distribution is symmetric under joint permutation of the outcomes.
Note that in \cref{fig:symmmaxcorr2}, the constraint of \cref{eq:constraint q12} is not strictly satisfied: we should have that the \textcolor{fy}{yellow} and \textcolor{fg}{green} components of $q_C^{12}$ are mixed together.
However, the flags that we drew in \cref{fig:symmmaxcorr2} are \emph{equivalent} (as far as the resulting distribution goes) with flags where the \textcolor{fy}{yellow} and \textcolor{fg}{green} regions would be mixed together --- the latter being harder to clearly depict.

Finally, we need to relate $q$ to $\nu$ such that the distribution is symmetric under permutation of the parties. 
The flags as drawn in \cref{fig:symmmaxcorr2} result in the following distribution: for all $k,l,m\in\{1,2,3,4\}$, $k\neq l\neq m\neq k$,
\begin{align*}
	p(k,k,k) &
	= \frac{1}{16}, \\
	p(k,k,l) &= \frac{\nu}{16}, \\
	p(k,l,k) &= p(l,k,k) = (1-\nu)\frac{q}{16}, \\
	p(k,l,m) &= (1-\nu)\frac{1}{16}\Big(\frac{1}{2}-q\Big).
\end{align*}
We must thus require $p(k,k,l) = p(k,l,k)$ to achieve full symmetry, which implies $q = \sfrac{\nu}{(1-\nu)}$.
Noting that $q \in [0,\frac{1}{2}]$, we must thus have $\nu \in [0,\frac{1}{3}]$.
Thus, all distributions resulting from our construction above can be characterized by
\begin{align}
 \label{lambdamax}
	\left(\saaa, \saab, \sabc \right) &= \left( \frac{1}{4},  \frac{9\nu}{4},  \frac{3-9 \nu}{4} \right). 
\end{align}
Building also on the intuition we gained while constructing different local models, this leads us to the conjecture that any local model in the triangle scenario that yields a fully symmetric distribution with four outcomes per party can reach a maximal correlation that is very close to $\saaa = p(A=B=C) = \sfrac{1}{4}$. 
Finding the exact maximum value and proving such an upper bound remains an open question.
A direct implication of this conjecture would be the nonlocality with noise-robustness of the distribution obtained from the Elegant Joint Measurement.
Note that while here we focus on the scenario with four outcomes, we can directly extend our construction to strategies with $N$ outcomes that yield $\saaa^{(N)} = \sfrac{1}{N}$ for $N\geq 3$ (see \cref{app:generalN}) and state a generalized conjecture.

\subsection{More general flags}
\label{app:genflags}

For a more general characterization of the local distributions in the symmetric subspace, we should consider also strategies with smaller correlations or even anti-correlations.
Thus, we allow Alice to choose a fraction $\eta$ out of her fraction $\nu$ in which she previously correlated to Bob, in which she still correlates to Bob, while in the  $(1-\eta)\nu$ remaining fraction she completely anti-correlates to Bob. 
The flag corresponding to her more general strategy is illustrated in \cref{fig:Charliemoregen} (note that Bob's strategy stays the same as before).
In a similar way, Charlie could decide to anti-correlate in his strategies $q_C^{ii}$ (i.e.\ on the diagonal of his flag) by choosing output $i$ only with probability $r \in [0,1]$ and else sample uniformly from the other outputs, as also illustrated in \cref{fig:Charliemoregen}. 
Again, we note that the flag of Charlie depicted in \cref{fig:Charliemoregen} does not strictly satisfy the constraint of \cref{eq:constraint q11}, but is equivalent to one where the \textcolor{fr}{purple}, \textcolor{fy}{yellow} and \textcolor{fg}{green} colors are mixed together in the $q_C^{11}$ sub-response-function.

\begin{figure}[htb]
    \centering
        \includegraphics{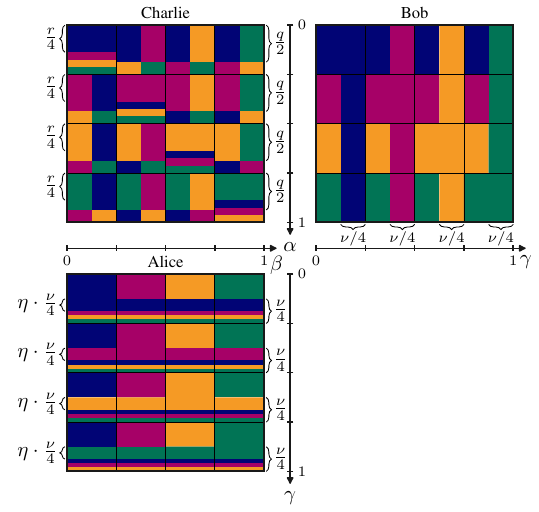}
	\caption{Alice's and Charlie's more general flags, which allow for more anti-correlations. Alice is changing her strategies in the fraction $\nu$, while Charlie is only replacing his strategies $q_C^{ii}$, i.e., those that lie on the diagonal of his grid.}
	\label{fig:Charliemoregen}
\end{figure}

Again, our construction already guarantees that the resulting distribution is symmetric under permutation of the outcomes, but we still need to ensure that it is also symmetric under permutation of the parties. For that we need to relate $q$ to $r$, $\nu$ and $\eta$. Let us determine the following probabilities: for all $k,l,m \in \{1,2,3,4\}$, with $k \neq l \neq m \neq k$,
\begin{align*}
	p(k,k,k) &= \frac{1}{16}\left[(1-\nu)r+\nu\eta\right], \\
	p(k,k,l) &= \frac{1}{16}\left[(1-\nu)\frac{1-r}{3}+\nu\eta\right], \\
	p(k,l,k) &= p(l,k,k) = \frac{1}{16}\left[(1-\nu)q+\nu\frac{1-\eta}{3}\right], \\
	p(k,l,m)&= \frac{1}{16}\left[(1-\nu)(\frac{1}{2}-q)+\nu\frac{1-\eta}{3}\right].
\end{align*}
To achieve symmetry under permutation of the outcomes we require $p(k,k,l) = p(k,l,k)$ which implies 
\begin{equation*}
    q = \frac{1-r}{3}+\frac{\nu}{1-\nu} \frac{4 \eta -1}{3}.
\end{equation*}
Recall that $q$ is subject to the constraint $q \in [0,\sfrac12]$, which implies a constraint on $r,\nu,\eta$.
This yields the distribution in \cref{generallambda}.

\subsection{Even more anti-correlated flags}
\label{app:anticorrflags}

To further explore the set of anti-correlated strategies, a new approach was based on the idea to minimize the probabilities $p(A=B)$, $p(A=C)$ and $p(B=C)$ by completely anti-correlating all off-diagonal elements of the flags.
Using again the construction of \cref{sec:defining outcome-symmetric local models}, this can be done when each party outputs $k$ in their substrategies $q_A^{jk}$, $q_B^{jk}$ and $q_C^{jk}$, respectively, for $j \neq k$. 
Then we only need to determine the ``diagonal'' strategies $q_A^{kk}$, $q_B^{kk}$ and $q_C^{kk}$. 
Uncorrelating Alice and Bob by returning all outputs $i \neq k$ in strategies $q_A^{kk}$ and $q_B^{kk}$ with equal probability independently of $\gamma$ allows to parameterize Charlie's strategy in the way depicted in \cref{fig:anticorrelated}. 
Note that Charlie has only one parameter $r$ to tune the correlation with Alice and Bob, as everything else results from the symmetry constraints.
\begin{figure}[htb]
    \centering
        \includegraphics{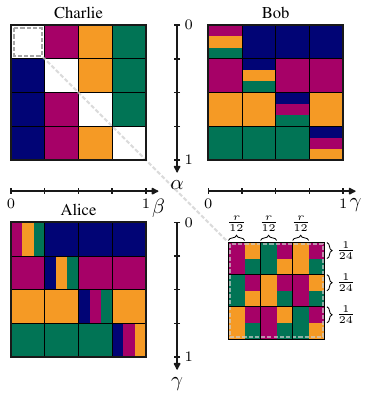}
	\caption{Even more anti-correlated flags that yield the distribution in \cref{eq:bottomrightline}.}
	\label{fig:anticorrelated}
\end{figure}
This leads to the distribution of \cref{eq:bottomrightline}.
To obtain the full ``spike'' at the bottom right of the local distributions shown in \cref{fig:triangleanalytic}, one should furthermore take into the decorrelation currents presented in \cref{app:currents}.

\subsection{Decorrelation currents}
\label{app:currents}

Given a distribution which is guaranteed to be local, but whose local model is unknown, it is occasionally possible to deduce that other distributions are also local.
Such deductions can be obtained by modifying the original unknown local model: the sources and the parties may randomly deviate from their original behavior.
The deviations from the original behavior must be chosen in such a way that the new distribution which is also local can be written entirely in terms of the original output distribution and not in terms of its unknown original local model.
If the deviations from the original model are parametrized by $\varepsilon$, such that for $\varepsilon = 0$, there is no deviation, and for $\varepsilon = 1$, the distribution is ``more mixed'' then the original distribution, then we have what we call a \emph{decorrelation current}, which flows continuously from the original local distribution to other local distributions which are more mixed.

We now give an explicit example of such a decorrelation current.
This is the current that we use to extend the line of local distributions described in \cref{eq:bottomrightline} to the two-dimensional local region labeled $(*)$ in \cref{fig:triangleanalytic}.

\begin{lemma}
    Suppose that the fully symmetric distribution $p^{(0)}$ described by $(\saaa^{(0)},\saab^{(0)},\sabc^{(0)})$ is local.
    Then, for all $\varepsilon \in [0,1]$, for all $l \in [0,1]$, the fully symmetric distribution $p^{(\varepsilon,l)}$ described by
    \begin{align*}
        \saaa^{(\varepsilon,l)} &=
        (1-\varepsilon)^3 \saaa^{(0)} + 3\left(\varepsilon(1-\varepsilon) + \frac14 \varepsilon^3 \right) \frac{1 - l}{4} + \frac{1}{64} \varepsilon^3 \\
        \saab^{(\varepsilon,l)} &= 
        (1-\varepsilon)^3 \saab^{(0)} + 3\left( \varepsilon(1-\varepsilon) + \frac14 \varepsilon^3 \right) \frac{3 - l}{4} + \frac{9}{64} \varepsilon^3 \\
        \sabc^{(\varepsilon,l)} &= 
        (1-\varepsilon)^3 \sabc^{(0)} + 3\left( \varepsilon(1-\varepsilon) + \frac14 \varepsilon^3 \right) \frac{l}{2} + \frac{6}{64} \varepsilon^3
    \end{align*}
    is also local.
\end{lemma}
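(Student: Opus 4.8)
The plan is to exhibit an explicit local model for $p^{(\varepsilon,l)}$ by ``decorating'' the unknown local model of $p^{(0)}$ — deterministic response functions $a(\beta,\gamma)$, $b(\gamma,\alpha)$, $c(\alpha,\beta)$ with uniform sources — with a re-randomization layer. Each source $s\in\{\alpha,\beta,\gamma\}$ independently, with probability $\varepsilon$, switches to a \emph{re-randomizing} mode: it then additionally emits a fresh uniform outcome $o_s\in\{1,2,3,4\}$, an ``(anti)correlate'' bit that equals ``anti'' with probability $l$, and, in the ``anti'' case, a second outcome $o_s'$ drawn uniformly from $\{1,2,3,4\}\setminus\{o_s\}$. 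A party that sees no re-randomizing source among its two sources keeps its original response function; a party that sees at least one picks one of its re-randomizing sources uniformly at random (private randomness) and, for the chosen source $s$, outputs $o_s$ if $s$ said ``correlate'', and outputs $o_s$ or $o_s'$ according to a fixed rule distinguishing the two parties attached to $s$ if $s$ said ``anti''. By construction each party's output depends only on its own two sources and its private coin, so the model respects the triangle's connectivity; and since the only feature of the original model that survives in any branch where one of a party's sources re-randomizes is the single-party marginal of $p^{(0)}$ — which is uniform because $p^{(0)}$ is fully symmetric — the output distribution depends on $p^{(0)}$ only and not on its hidden model.

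I would then compute the output distribution by conditioning on the number $k\in\{0,1,2,3\}$ of re-randomizing sources, which is $\mathrm{Bin}(3,\varepsilon)$-distributed. For $k=0$ one recovers $p^{(0)}$. The key elementary sub-step is that a distribution in which one fixed pair of parties is perfectly correlated (probability $1-l$) or forced to take distinct uniform values (probability $l$) while the third party is uniform and independent has $(\saaa,\saab,\sabc)=(\tfrac{1-l}{4},\tfrac{3-l}{4},\tfrac{l}{2})$; write $q_l$ for the fully symmetric distribution with these coordinates. For $k=1$ the unique re-randomizing source forces exactly the pair attached to it into this ``single-pair'' pattern, the third party falling back on its (uniformly-marginal, independent) original response, and averaging over which source is special makes the synchronized pair uniform over the three pairs, so the conditional distribution is exactly $q_l$. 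For $k=2$ the single party adjacent to both special sources picks one of them, again producing a single-pair pattern; a short case check over the three choices of special pair shows the formed pair is again uniform over the three pairs, so the conditional distribution is $q_l$ once more — in particular with \emph{no} maximally-mixed component. For $k=3$ every party picks one of its two sources, and among the $2^3$ equally likely choices exactly the two $3$-cycles leave no pair synchronized (each party outputting an independent uniform value, i.e.\ the maximally mixed distribution $p_{\mathrm{mm}}$), while the other six choices synchronize exactly one pair, two choices per pair; hence the $k=3$ conditional distribution is $\tfrac34\,q_l+\tfrac14\,p_{\mathrm{mm}}$.

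It remains to check that $p^{(\varepsilon,l)}$ is genuinely fully symmetric: outcome symmetry holds because $o_s$ and $o_s'$ are $S_4$-symmetrically distributed and $p^{(0)}$ is outcome-symmetric, and party symmetry holds because the whole recipe is invariant under the natural $S_3$ action simultaneously permuting the parties and the sources (the ``first/second party of a pair'' labeling being immaterial since $(o_s,o_s')$ and $(o_s',o_s)$ have the same law). Assembling the binomial mixture,
\begin{align*}
p^{(\varepsilon,l)}
&= (1-\varepsilon)^3 p^{(0)}
+ \big(3\varepsilon(1-\varepsilon)^2+3\varepsilon^2(1-\varepsilon)+\tfrac34\varepsilon^3\big) q_l
+ \tfrac14\varepsilon^3\, p_{\mathrm{mm}} \\
&= (1-\varepsilon)^3 p^{(0)}
+ \big(3\varepsilon(1-\varepsilon)+\tfrac34\varepsilon^3\big) q_l
+ \tfrac14\varepsilon^3\, p_{\mathrm{mm}},
\end{align*}
and reading off coordinates — using that $p_{\mathrm{mm}}$ has $(\saaa,\saab,\sabc)=(\tfrac1{16},\tfrac9{16},\tfrac6{16})$, so $\tfrac14\varepsilon^3 p_{\mathrm{mm}}$ contributes $(\tfrac1{64},\tfrac9{64},\tfrac6{64})\varepsilon^3$ — reproduces the three displayed formulas for $\saaa^{(\varepsilon,l)}$, $\saab^{(\varepsilon,l)}$, $\sabc^{(\varepsilon,l)}$.

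The main obstacle is the $k=3$ analysis: obtaining the coefficient \emph{exactly} $\tfrac14$ rests on the precise combinatorial fact that at most one pair can ever synchronize and that each synchronizes with probability $\tfrac14$, and on achieving this using only \emph{private} randomness at the parties — one cannot instead symmetrize after the fact, since the triangle network admits no genuinely tripartite shared randomness. A secondary, purely bookkeeping, point is verifying that the $k=1$ and $k=2$ branches contribute no maximally-mixed component and that in every branch the synchronized pair ends up uniformly distributed over the three pairs, which is exactly what makes $p^{(\varepsilon,l)}$ equal to the claimed fully symmetric distribution rather than merely some distribution with the right three coordinates.
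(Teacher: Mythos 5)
Your construction is exactly the paper's: each source is augmented with a flag raised with probability $\varepsilon$ and a pair of outcomes whose law (equal with probability $\tfrac{1-l}{4}$ each, unequal with probability $\tfrac{l}{12}$ each) is precisely the paper's distribution $q$, and each party defers to a flagged source, choosing uniformly between two flagged ones; your case analysis by the binomial count of flagged sources, including the $\tfrac34 q_l + \tfrac14 p_{\mathrm{mm}}$ split in the all-flagged case and the use of the uniform marginal of $p^{(0)}$, reproduces the paper's table and final assembly. The proof is correct and essentially identical to the paper's, differing only in presentation.
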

\begin{proof}
Since the distribution $p^{(0)}$ is local, it admits a local model of the form
$$
    p^{(0)}(a,b,c) = \int_{[0,1]^{\times 3}} \dd\alpha\dd\beta\dd\gamma p_A^{(0)}(a|\beta,\gamma) p_B^{(0)}(b|\gamma,\alpha) p_C^{(0)}(c|\alpha,\beta).
$$
We now define a local model for the distribution $p^{(\varepsilon,l)}$.
We let the three sources distribute tuples of the form 
\begin{align*}
    \textup{$\alpha$ source: } (\alpha, x, b_{BC}, c_{BC}), \\
    \textup{$\beta$ source: } (\beta, y, a_{AC}, c_{AC}), \\
    \textup{$\gamma$ source: } (\gamma, z, a_{AB}, b_{AB}),
\end{align*}
where $\alpha,\beta,\gamma \in [0,1]$, $x,y,z \in \{0,1\}$, and $a_{AB},a_{AC},b_{AB},b_{BC},c_{AC},c_{BC}\in \{1,2,3,4\}$.
This is summarized in \cref{fig:triangle for currents}.
\begin{figure}[ht]
    \centering
    \includegraphics{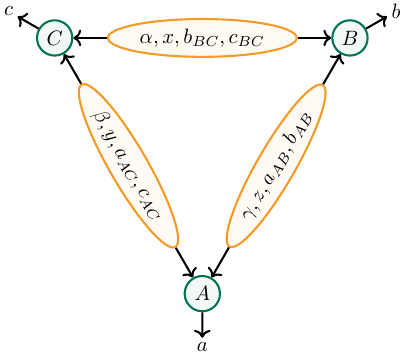}
    \caption{The tuples sent out by the three sources.}
    \label{fig:triangle for currents}
\end{figure}
The source distributions are defined as follows:
\begin{align*}
    p(\alpha,x, b_{BC}, c_{BC}) &= r(x)q(b_{BC}, c_{BC}), \\
    p(\beta,y, a_{AC}, c_{AC}) &= r(y)q(a_{AC},c_{AC}), \\
    p(\gamma,z,a_{AB},b_{AB}) &= r(z)q(a_{AB},b_{AB}),
\end{align*}
i.e., $\alpha,\beta,\gamma$ are still uniformly distributed in $[0,1]$, $x,y,z$ are distributed according to $r$, and the pair of outcomes $a,b$ is distributed as $q(a,b)$.
We let $r(0) = 1 - \varepsilon$ and $r(1) = \varepsilon$.
We define the distribution $q$ as:
\begin{equation}
    \label{eq:def q current}
    q(a,b) = \left\{
        \begin{aligned}
            \frac{1 - l}{4} &\textup{ if } a = b, \\
            \frac{l}{12} &\textup{ if } a\neq b.
        \end{aligned}
    \right.
\end{equation}
We now define the response functions of the parties.
The idea is the following: Alice receives two bits, $y$ and $z$.
If both of those bits are $0$, Alice follows the original strategy looking at $\beta$ and $\gamma$.
If only one of those bits is $1$, Alice outputs the $a$ that is sent along that bit.
If both of those bits are $1$, then Alice chooses uniformly at random whether to output $a_{AB}$ or $a_{AC}$.
Bob and Charlie follow a similar strategy.
This results in the following response functions:
\newcommand{\kd}[2]{\delta_{#1,#2}}
\begin{align*}
    p_A(a|\beta,y, a_{AC}, c_{AC}, \gamma,z, a_{AB}, b_{AB}) &= \kd{y}{0} \kd{z}{0} p_A^{(0)}(a|\beta,\gamma) + \kd{y}{1} \kd{z}{0} \kd{a}{a_{AC}} + \kd{y}{0}\kd{z}{1} \kd{a}{a_{AB}} + \kd{y}{1} \kd{z}{1} \frac{1}{2}(\kd{a}{a_{AB}} + \kd{a}{a_{AC}}), \\
    p_B(b|\gamma,z, a_{AB}, b_{AB}, \alpha, x,  b_{BC}, c_{BC}) &= \kd{x}{0} \kd{z}{0} p_B^{(0)}(b|\gamma,\alpha) + \kd{x}{1} \kd{z}{0} \kd{b}{b_{BC}} + \kd{x}{0}\kd{z}{1} \kd{b}{b_{AB}} + \kd{x}{1} \kd{z}{1} \frac{1}{2}(\kd{b}{b_{AB}} + \kd{b}{b_{BC}}), \\
    p_C(c|\alpha,x, b_{BC}, c_{BC}, \beta,y, a_{AC}, c_{AC}) &= \kd{x}{0} \kd{y}{0} p_C^{(0)}(c|\alpha,\beta) + \kd{x}{1} \kd{y}{0} \kd{c}{c_{BC}} + \kd{x}{0}\kd{y}{1} \kd{c}{c_{AC}} + \kd{x}{1} \kd{y}{1} \frac{1}{2}(\kd{c}{c_{AC}} + \kd{c}{c_{BC}}).
\end{align*}
We used the Kronecker delta $\kd{a}{b}$ which is $1$ if $a=b$ and 0 else.
We can now compute the output distribution of such a local model, averaging over the different cases for the bits $x,y,z$.
One may of course proceed analytically, but looking at individual terms makes the calculation easier.
We summarize this information in \cref{tab:currents}.
\begin{table}[ht]
    \newcommand{\mycell}[1]{\makecell[{{p{5cm}}}]{#1}}
    \centering
    \begin{tabular}{|c|c|c|c|c|}
    \hline
         Case $(x,y,z)$ & Analogous cases & Probability of case & Description & Resulting distribution  \\
    \hline\hline
         $(0, 0, 0)$ & $\emptyset$ & $(1-\varepsilon)^3$ & \mycell{
            The parties simply follow the original local model.
        } & $p^{(0)}(a,b,c)$ \\
        \hline
         $(1, 0, 0)$ & $\begin{aligned}\{&(0,1,0), \\ &(0,0,1) \}\end{aligned}$ & $\varepsilon(1-\varepsilon)^2$ & \mycell{
            Bob and Charlie output $b_{BC}$ and $c_{BC}$,
            distributed according to $q$. Alice follows
            her original strategy but neither Bob nor
            Charlie are looking at $\beta$ and $\gamma$. Her
            output is thus distributed according to the
            marginal $p^{(0)}(a)$.
        } & $q(b,c)p^{(0)}(a)$ \\
    \hline
        $(1,1,0)$ & $\begin{aligned}\{&(1,0,1), \\ &(0,1,1)\}\end{aligned}$ & $\varepsilon^2(1-\varepsilon)$ & \mycell{
            Alice and Bob output $a_{AC}$ and $b_{BC}$,
            respectively.
            Charlie outputs $c_{AC}$ or
            $c_{BC}$ with probability $\sfrac12$.
        }
        & $\frac12\big(q(a,c) q(b) + q(b,c) q(a)\big)$ \\
    \hline
        $(1,1,1)$ & $\emptyset$ & $\varepsilon^3$ & \mycell{
            Alice chooses to output $a_{AB}$ or $a_{AC}$ with
            probability $\sfrac12$, and similarly for Bob and Charlie.
        } & $\begin{gathered}
            \textstyle\frac14 \big(q(a,b)q(c) + q(a,c)q(b) \\
            + \, q(b,c)q(a) + q(a)q(b)q(c) \big)
        \end{gathered}$ \\
    \hline
    \end{tabular}
    \caption{The different cases for the bits $x,y,z$ and the resulting output distributions.}
    \label{tab:currents}
\end{table}
Thus, the output distribution is
\begin{align*}
    p^{(\varepsilon,l)}(a,b,c)
    = (1-\varepsilon)^3 p^{(0)}(a,b,c)
    &+ \varepsilon(1-\varepsilon)^2 \left( q(a,b) p^{(0)}(c) + q(a,c) p^{(0)}(b) + q(b,c) p^{(0)}(a) \right) \\
    &+ \varepsilon^2(1-\varepsilon) \Big( q(a,b) q(c) + q(a,c) q(b) + q(b,c) q(a) \Big) \\
    &+ \frac14 \varepsilon^3 \Big( q(a,b) q(c) + q(a,c) q(b) + q(b,c) q(a) + q(a) q(b) q(c) \Big).
\end{align*}
We now recall the definition of $q$, see \cref{eq:def q current}.
It implies that the marginal $q(a)$ is maximally mixed, i.e., $q(a) = \sfrac14$.
Furthermore, since $p^{(0)}$ was assumed to be fully symmetric, its single party marginals such as $p^{(0)}(a)$ are also maximally mixed.
Thus, we obtain
\begin{align*}
    p^{(\varepsilon,l)}(a,b,c)
    &= (1-\varepsilon)^3 p^{(0)}(a,b,c)
    + \Big(
        \varepsilon(1-\varepsilon)^2 
        + \varepsilon^2(1-\varepsilon) 
        + \frac14 \varepsilon^3
    \Big)
    \frac{q(a,b) + q(a,c) + q(b,c)}{4}
    + \frac{1}{256} \varepsilon^3 \\
    &= (1-\varepsilon)^3 p^{(0)}(a,b,c)
    + \Big(
        \varepsilon(1-\varepsilon)
        + \frac14 \varepsilon^3
    \Big)
    \frac{q(a,b) + q(a,c) + q(b,c)}{4}
    + \frac{1}{256} \varepsilon^3.
\end{align*}
Inserting the definition of $q$, we find that
\begin{align*}
    p^{(\varepsilon,l)}(1,1,1) &= (1-\varepsilon)^3 p^{(0)}(1,1,1) + 3\left(\varepsilon(1-\varepsilon) + \frac14\varepsilon^3\right) \frac{1 - l}{16} + \frac{1}{256} \varepsilon^3, \\
    p^{(\varepsilon,l)}(1,1,2) &= (1-\varepsilon)^3 p^{(0)}(1,1,2) + \left(\varepsilon(1-\varepsilon) + \frac14\varepsilon^3\right) \frac{3-l}{48} + \frac{1}{256} \varepsilon^3, \\
    p^{(\varepsilon,l)}(1,2,3) &= (1-\varepsilon)^3 p^{(0)}(1,2,3) + 3\left(\varepsilon(1-\varepsilon) + \frac14\varepsilon^3\right) \frac{l}{48} + \frac{1}{256} \varepsilon^3.
\end{align*}
Using \cref{eq:def s111}, we obtain the claim.
\end{proof}

\newpage
\begin{figure}[t]
    \centering
    \includegraphics{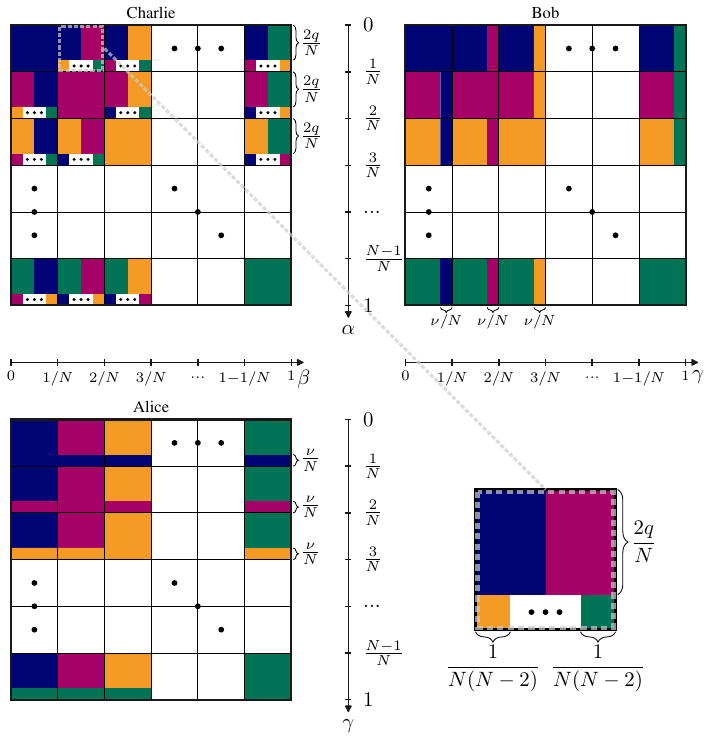}
\caption{The $N$ outcome flags yielding the local distributions of \cref{eq:n outcome distribs}. The color \textcolor{fb}{blue} labels the outcome 0, the color \textcolor{fr}{red} labels the outcome 1, the color \textcolor{fy}{yellow} labels the outcome 2, the outcomes 3 to $N-1$ are left implicit, and the color \textcolor{fg}{green} labels the outcome $N$.}
\label{fig:n outcome flags}
\end{figure}

\section{$N$ outcomes per party}
\label{app:generalN}

We now turn to some investigations of fully symmetric distributions with $N \geq 3$ outcomes.
Analogously to the four outcome case, such fully symmetric distributions can be characterized by the following three numbers:
\begin{align*}
    \saaa = N p(1,1,1), \qquad \saab = 3N(N-1) p(1,1,2), \qquad \sabc = N(N-1)(N-2) p(1,2,3),
\end{align*}
such that $\saaa + \saab + \sabc = 1$.
The corresponding extremal distributions are defined as
\begin{align*}
    p_{111}^{(N)} &= \frac{1}{N} \sum_{k=1}^N [k,k,k], \\
    p_{112}^{(N)} &= \frac{1}{3N(N-1)} \sum_{\substack{k,l=1\\k\neq l}}^N [k,k,l] + [k,l,k] + [l,k,k], \\
    p_{123}^{(N)} &= \frac{1}{N(N-1)(N-2)} \sum_{\substack{k,l,m=1\\k\neq l\neq m\neq k}}^N [k,l,m].
\end{align*}
Notice that in this case, the Finner inequality of \cref{eq:general finner} simplifies to
\begin{align*}
    p(a,b,c) \leq \frac{1}{N^{3/2}},
\end{align*}
since again the marginals of fully symmetric distributions are uniform.
This implies that for all fully symmetric distributions that have either a local or quantum model,
\begin{align}
\label{eq:finner n outcomes}
    \saaa \leq \frac{1}{\sqrt{N}}.
\end{align}

Analogously to the construction for $N=4$, we can generalize our construction of strategies that yield strongly correlated probabilities to $N$ outcomes. 
Similar to the division illustrated in \cref{fig:symmtemplate}, each strategy is divided now in $N$ by $N$ substrategies that are invariant under permutation of the outcomes as in \cref{eq:constraint local models}.
The substrategies need to be adapted to more outcomes, yielding the general construction as depicted in \cref{fig:n outcome flags}, resulting in the following distribution: for all $k,l,m\in\{1,..,N\}$, $k\neq l\neq m\neq k$,
\begin{align*}
	p(k,k,k) &= \frac{1}{N^2}, \\
	p(k,k,l) &= \frac{\nu}{N^2}, \\
	p(k,l,k) &= p(l,k,k) = \frac{q}{N^2}(1-\nu), \\
	p(k,l,m) &= \frac{(1-\nu)(1-2q)}{N^2(N-2)}.
\end{align*}
Again, for symmetry under permutation of the parties we require $p(k,k,l) = p(k,l,k)$, which implies $q = \frac{\nu}{1-\nu}$.
Noting that $q \in [0,\frac{1}{2}]$, we must thus have $\nu \in [0,\frac{1}{3}]$.

Thus, the resulting distributions can be characterized for any $N\geq 3$ by 
\begin{align}
\label{eq:n outcome distribs}
	\left(\saaa, \saab, \sabc \right) &= \left( \frac{1}{N},  \frac{3\nu(N-1)}{N},  \frac{(1-3\nu)(N-1)}{N} \right), \qquad \nu \in \left[0,\frac{1}{3}\right].
\end{align}

We have then also applied LHV-net to characterize the set of local distributions and obtained results shown in \cref{fig:nn cards}. 
Again, these results support the conjecture that classically, the highest $\saaa$ with $N$ outcomes is close to $\sfrac1N$.
\begin{figure}[ht]
    \centering
    \begin{subfigure}{0.49\textwidth}
        \includegraphics{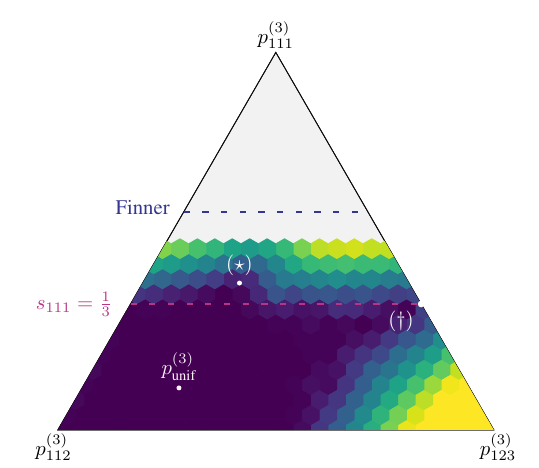}
        \caption{$3$ outcomes per party.}
        \label{fig:nn card 3}
    \end{subfigure}
    \begin{subfigure}{0.49\textwidth}
        \includegraphics{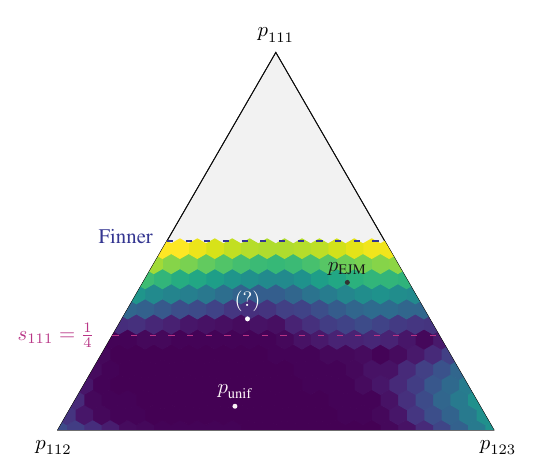}
        \caption{$4$ outcomes per party.}
        \label{fig:nn card 4}
    \end{subfigure}
    \begin{subfigure}{0.49\textwidth}
        \includegraphics{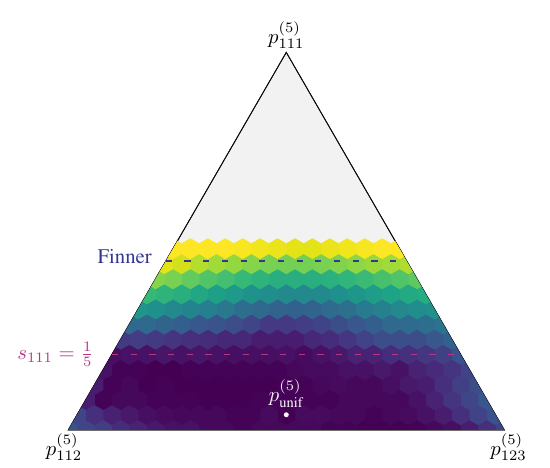}
        \caption{$5$ outcomes per party.}
        \label{fig:nn card 5}
    \end{subfigure}
    \begin{subfigure}{0.49\textwidth}
        \includegraphics{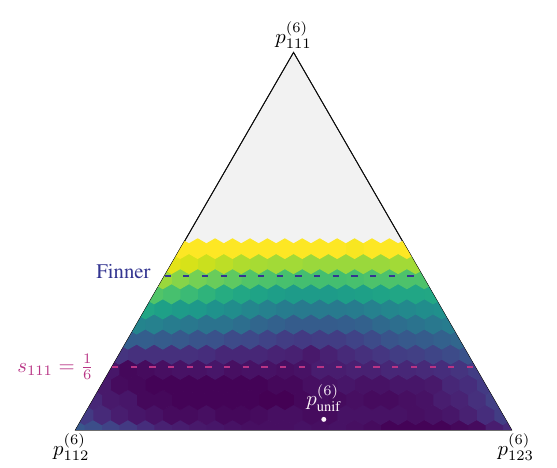}
        \caption{$6$ outcomes per party.}
        \label{fig:nn card 6}
    \end{subfigure}
    \begin{subfigure}{\textwidth}
        \LambdaScope{
            \node at (0,-0.08) {$\phantom{vspacing}$};
            \DrawColorBar
        }
    \end{subfigure}
    \caption{
    Distance of symmetric distributions to the local set according to LHV-Net ($N=3,4,5,6$ for the (a,b,c,d) subplots, respectively, with $\saaa<\sfrac12$ target distributions considered only). For each point, LHV-Net independently minimized the distance to the local set. Color represents Euclidean distance of the numerically found closest (not necessarily symmetric) distribution; values are artificially cut off at $0.1$ in order to provide a collective color scale. Additionally, the Finner inequality and $\sfrac1N$ line is depicted in each map, with the latter being conjectured to be close to the true upper bound.  Additional points: $p_{\text{unif}}$ is the uniform distribution, $(\star)$ marks the counter-example of a classical distribution that has $\saaa>\sfrac1N$ for $N=3$ (see \cref{app:3outcomesExample}); $(\dagger)$ marks the $\saab=0$, $\saaa=\sfrac13$ distribution, which is proven to be the only symmetric distribution with $\saab=0$ for $N=3$ that is local, see \cref{app:3outcomesUniqueExampleIfS112is0}; $(?)$ marks the local distribution with $\saaa>\sfrac14$ for $N=4$, which is not entirely symmetric, but is close (see \cref{app:saaa>0.25}).}
    \label{fig:nn cards}
\end{figure}

\newpage
\section{3 outcomes per party}

\subsection{An example 3-outcome local distribution with $\saaa>\sfrac13$}
\label{app:3outcomesExample}

While the results mentioned earlier in this work seem to indicate that the upper bound for $\saaa$ is close to $\sfrac{1}{N}$ given $N$ outcomes per party, we were able to find a counter example for the case of $N=3$, where we can create classical correlations that yield $\saaa > \sfrac{1}{3}$. The corresponding flags are in \cref{fig:abovebound}, which yield the distribution
\begin{align}
\label{eq:3 outcome example}
	\left(\saaa, \saab, \sabc \right) &= \left( \frac{7}{18},  \frac{7}{18},  \frac{2}{9} \right).
\end{align}
\begin{figure}[htb]
    \centering
    \includegraphics{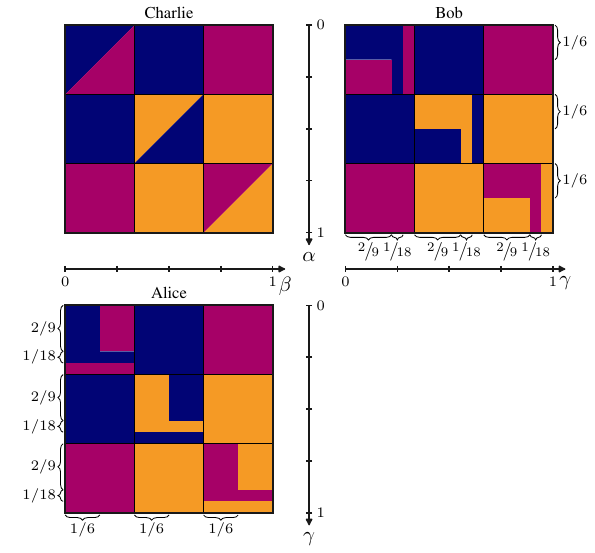}
\caption{The 3 outcome flags yielding the fully symmetric distribution of \cref{eq:3 outcome example}.}
\label{fig:abovebound}
\end{figure}

For more than three outcomes, the generalization of this strategy is not straightforward, since the increased number of combinations for 112-type and 123-type outcomes increases the difficulty of satisfying the symmetry constraints.

\newpage
\subsection{(Almost) unique local strategy for 3 outcomes under the constraint $\saab=0$}
\label{app:3outcomesUniqueExampleIfS112is0}

The Elegant Joint Measurement distribution is particularly fascinating, as it displays a high probability of having $(1,1,1)$-type events, while maintaining a very low probability of having $(1,1,2)$ events. This motivates us to study examples of local distributions where $(1,1,2)$-type events are scarce, or do not appear at all.

We show that for the triangle network with 3 outcomes for each party and under the symmetry constraints (described in \cref{app:generalN}), if one considers distributions with $\saab=0$, then $\saaa = \sfrac13$ is not only an upper bound for $\saaa$, it is also the only possible value for $\saaa$.
Let us call this distribution $p_{\dagger}$, which is portrayed in the symmetric subspace in \cref{fig:nn card 3}. 
Moreover, we show that the local strategy to achieve it is essentially unique, and after reordering the hidden variable values, one flag can be made to have a 3$\times$3 Latin-square structure, where each row and column contains only one of each color, while the other two can be 3$\times$3, 3$\times$2 or 3$\times$1 ``Latin squares''. Technically, a Latin square is an $n \times n$ array, colored with $n$ colors such that each row and column contains exactly one of each color. Hence, when having $n\times m$ grids ($m<n$) we work with generalizations (or cropped versions) of Latin squares.

We will work in the discrete local hidden variable picture with deterministic outcomes labeled by colors (red, green, blue, or $\tR$, $\tG$, $\tB$). We can do this without loss of generality, by using maximally 24 symbols for each local hidden variable~\cite{rosset_2017_universal}. With a slight abuse of notation, we will label the set of all symbols of source $\alpha$ as $\alpha$, and similarly for $\beta,\gamma$.

\subsubsection{Qualitative response functions}
We will first show that the response functions must take the Latin square structure. The steps are as follow.

\noindent\textbf{Step 1. Choosing an RRR.} First, start with a triple of symbols $(\alpha_{R0},\beta_{R0},\gamma_{R0})$, where $\alpha_{R0}\in \alpha,\beta_{R0}\in\beta$, and $\gamma_{R0}\in\gamma$, such that the outcome is $\tR\tR\tR$ (formally, $a(\beta_{R0},\gamma_{R0})=\tR$, $b(\gamma_{R0},\alpha_{R0})=\tR$, and $c(\alpha_{R0},\beta_{R0})=\tR$) . These $\tR$ responses are portrayed in \cref{fig:15a}.

\noindent\textbf{Step 2. Rearranging $\gamma$ symbols.} Next, notice that in Alice's flag, for any $\gamma' \neq \gamma_{R0}$, the color of Alice's response $a(\beta_{R0},\gamma')$ uniquely determines Bob's response $b(\gamma',\alpha_{R0})$, according to the following table.
\begin{table}[h!]
\centering
\begin{tabular}{c c c}
\hline
$a(\beta_{R0},\gamma')$ & & $b(\gamma',\alpha_{R0})$ \\ \hline
$\tR$ & $\implies$ & $\tR$\\
$\tG$ & $\implies$ & $\tB$\\
$\tB$ & $\implies$ & $\tG$
\end{tabular}
\end{table}

\noindent This is true due to the strict $\saab=0$ condition, and since for all these cases Charlie's response is $\tR$, i.e.\ $c(\alpha_{R0},\beta_{R0}) = \tR$. In some sense, this can be seen as a ``color inversion around $\tR$'', induced by Charlie's $\tR$, which leaves $\tR$ invariant, but flips $\tG$ and $\tB$.

So let us now rearrange the symbols of $\gamma$, such that on Alice's flag, in the column defined by $\beta_{R0}$, we first have $\tR$ responses below Alice's first $\tR$ rectangle, and then $\tG$ responses, and finally all the $\tB$ responses, as shown in \cref{fig:15b}. Due to the table above, this immediately implies that Bob's first row will be ordered as $\tR,\tB, \tG$, as shown in \cref{fig:15b}.

\noindent\textbf{Step 3. Rearranging $\beta$ symbols.} We can apply the same procedure to the $\beta$ axis, grouping Alice's responses in her first $\gamma_{R0}$ row in the order $\tR, \tG, \tB$. This implies, due to Bob's $\tR$ in his $\alpha_{R0}$ row, that Charlie's first $\alpha_{R0}$ row must be in the order $\tR,\tB,\tG$. This is illustrated in \cref{fig:15c}.

\noindent\textbf{Step 4. Filling the rest of Alice's flag.} The rest of Alice's flag is uniquely determined by what we have already established on Bob and Charlie's flags, and the $\saab=0$ condition, i.e. by the table used in Step 2, and similar versions for $\tG$ and $\tB$. The filling is shown in \cref{fig:15d}.

\noindent\textbf{Step 5. Bob and Charlie's rows (rearranging $\alpha$ symbols)}. Before moving on, note that even with Bob and Charlie having only a single row, we have all (1,1,1)-type and (1,2,3)-type outcome events appearing, namely ($\tR\tR\tR$, $\tR\tG\tB$, $\tR\tB\tG$, $\tG\tG\tG$, $\tG\tR\tB$, $\tG\tB\tR$, $\tB\tB\tB$, $\tB\tR\tG$, $\tB\tG\tR$). In fact, as we later show, if one stretches the first row of Bob and Charlie to cover their whole flags ($\alpha_{R0}=1)$, one can already obtain $p_\dagger$.

Again, we may rearrange $\alpha$ such that Bob has the order of $\tR,\tG,\tB$ in his first column (as shown in \cref{fig:15e} and \cref{fig:15f}). Any $\tR$ below Bob's first $\tR$ will generate the same row as before on both Bob and Charlie's flag, whereas $\tG$ below this first $\tR$ will force Charlie to have a $\tB,\tG,\tR$ row, and Bob to have a $\tG,\tR,\tB$ row (\cref{fig:15e}. 

In a similar manner, if there is $\tB$ anywhere in Bob's first column, then Charlie's row becomes $\tG,\tR,\tB$, and Bob's is forced to be $\tB,\tG,\tR$, as shown in \cref{fig:15f}.

\begin{figure}[t!]
\renewcommand{\FlagSize}{55pt}
\renewcommand{\FlagSeparation}{25pt}
\begin{subfigure}{0.33\textwidth}
    \begin{tikzpicture}
        \FillAliceFlag{0}{0}{0.2}{0.3333333}{fr};
        \FillBobFlag{0}{0}{0.3}{0.2}{fr};
        \FillCharlieFlag{0}{0}{0.3}{0.3333333}{fr};
        \DrawFlagCanvas;
    \end{tikzpicture}%
    \caption{}
    \label{fig:15a}
\end{subfigure}
\begin{subfigure}{0.33\textwidth}
\begin{tikzpicture}

    \FillAliceFlag{0}{0}{0.3}{0.3333333}{fr};
    \FillAliceFlag{1.5}{0}{0.2}{0.3333333}{fg};
    \FillAliceFlag{1}{0}{0.5}{0.3333333}{fb};
    
    \FillBobFlag{0}{0}{0.3333333}{0.3}{fr};
    \FillBobFlag{0}{1.5}{0.3333333}{0.2}{fb};
    \FillBobFlag{0}{1}{0.3333333}{0.5}{fg};
    
    \FillCharlieFlag{0}{0}{0.3}{0.3333333}{fr};
    
    \DrawFlagCanvas;

\end{tikzpicture}%
    \caption{}
    \label{fig:15b}
\end{subfigure}
\begin{subfigure}{0.33\textwidth}
\begin{tikzpicture}

    \FillAliceFlag{0}{0}{0.3}{0.3333333}{fr};
    \FillAliceFlag{1.5}{0}{0.2}{0.3333333}{fg};
    \FillAliceFlag{1}{0}{0.5}{0.3333333}{fb};
    
    \FillAliceFlag{0}{3.3333333}{0.2}{0.1}{fr};
    \FillAliceFlag{0}{1.6333333}{0.2}{0.266666666}{fg};
    \FillAliceFlag{0}{2.3333333}{0.2}{0.3}{fb};
    
    \FillBobFlag{0}{0}{0.3333333}{0.3}{fr};
    \FillBobFlag{0}{1.5}{0.3333333}{0.2}{fb};
    \FillBobFlag{0}{1}{0.3333333}{0.5}{fg};
    
    \FillCharlieFlag{0}{0}{0.3333333}{0.4333333}{fr};
    \FillCharlieFlag{0}{1.6333333}{0.3333333}{0.266666666}{fb};
    \FillCharlieFlag{0}{2.3333333}{0.3333333}{0.3}{fg};
    
    \DrawFlagCanvas;

\end{tikzpicture}%
    \caption{}
    \label{fig:15c}
\end{subfigure}
\begin{subfigure}{0.33\textwidth}
\begin{tikzpicture}

    \FillAliceFlag{0}{0}{0.3}{0.4333333}{fr};
    \FillAliceFlag{1.5}{0}{0.2}{0.4333333}{fg};
    \FillAliceFlag{1}{0}{0.5}{0.4333333}{fb};
    
    \FillAliceFlag{1.5}{1.633333}{0.2}{0.26666666}{fb};
    \FillAliceFlag{1}{1.633333}{0.5}{0.26666666}{fr};

    \FillAliceFlag{1.5}{2.333333}{0.2}{0.3}{fr};
    \FillAliceFlag{1}{2.333333}{0.5}{0.3}{fg};
    
    \FillAliceFlag{0}{1.6333333}{0.3}{0.266666666}{fg};
    \FillAliceFlag{0}{2.3333333}{0.3}{0.3}{fb};
    
    \FillBobFlag{0}{0}{0.3333333}{0.3}{fr};
    \FillBobFlag{0}{1.5}{0.3333333}{0.2}{fb};
    \FillBobFlag{0}{1}{0.3333333}{0.5}{fg};
    
    \FillCharlieFlag{0}{0}{0.3333333}{0.4333333}{fr};
    \FillCharlieFlag{0}{1.6333333}{0.3333333}{0.266666666}{fb};
    \FillCharlieFlag{0}{2.3333333}{0.3333333}{0.3}{fg};
    \DrawFlagCanvas;

\end{tikzpicture}%
    \caption{}
    \label{fig:15d}
\end{subfigure}
\begin{subfigure}{0.33\textwidth}
\begin{tikzpicture}
    \FillAliceFlag{0}{0}{0.3}{0.4333333}{fr};
    \FillAliceFlag{1.5}{0}{0.2}{0.4333333}{fg};
    \FillAliceFlag{1}{0}{0.5}{0.4333333}{fb};
    
    \FillAliceFlag{1.5}{1.633333}{0.2}{0.26666666}{fb};
    \FillAliceFlag{1}{1.633333}{0.5}{0.26666666}{fr};

    \FillAliceFlag{1.5}{2.333333}{0.2}{0.3}{fr};
    \FillAliceFlag{1}{2.333333}{0.5}{0.3}{fg};
    
    \FillAliceFlag{0}{1.6333333}{0.3}{0.266666666}{fg};
    \FillAliceFlag{0}{2.3333333}{0.3}{0.3}{fb};
    
    \FillBobFlag{0}{0}{0.3333333}{0.3}{fr};
    \FillBobFlag{0}{1.5}{0.3333333}{0.2}{fb};
    \FillBobFlag{0}{1}{0.3333333}{0.5}{fg};

    \FillBobFlag{2}{0}{0.16666666}{0.3}{fg};
    \FillBobFlag{2}{1.5}{0.16666666}{0.2}{fr};
    \FillBobFlag{2}{1}{0.16666666}{0.5}{fb};
    
    \FillCharlieFlag{0}{0}{0.3333333}{0.4333333}{fr};
    \FillCharlieFlag{0}{1.6333333}{0.3333333}{0.266666666}{fb};
    \FillCharlieFlag{0}{2.3333333}{0.3333333}{0.3}{fg};

    \FillCharlieFlag{2}{0}{0.16666666}{0.4333333}{fb};
    \FillCharlieFlag{2}{1.6333333}{0.16666666}{0.266666666}{fg};
    \FillCharlieFlag{2}{2.3333333}{0.16666666}{0.3}{fr};
    
    \DrawFlagCanvas;
\end{tikzpicture}%
    \caption{}
    \label{fig:15e}
\end{subfigure}
\begin{subfigure}{0.33\textwidth}
\begin{tikzpicture}

    \FillAliceFlag{0}{0}{0.3}{0.4333333}{fr};
    \FillAliceFlag{1.5}{0}{0.2}{0.4333333}{fg};
    \FillAliceFlag{1}{0}{0.5}{0.4333333}{fb};
    
    \FillAliceFlag{1.5}{1.633333}{0.2}{0.26666666}{fb};
    \FillAliceFlag{1}{1.633333}{0.5}{0.26666666}{fr};

    \FillAliceFlag{1.5}{2.333333}{0.2}{0.3}{fr};
    \FillAliceFlag{1}{2.333333}{0.5}{0.3}{fg};
    
    \FillAliceFlag{0}{1.6333333}{0.3}{0.266666666}{fg};
    \FillAliceFlag{0}{2.3333333}{0.3}{0.3}{fb};
    
    \FillBobFlag{0}{0}{0.3333333}{0.3}{fr};
    \FillBobFlag{0}{1.5}{0.3333333}{0.2}{fb};
    \FillBobFlag{0}{1}{0.3333333}{0.5}{fg};

    \FillBobFlag{2}{0}{0.16666666}{0.3}{fg};
    \FillBobFlag{2}{1.5}{0.16666666}{0.2}{fr};
    \FillBobFlag{2}{1}{0.16666666}{0.5}{fb};

    \FillBobFlag{1}{0}{0.5}{0.3}{fb};
    \FillBobFlag{1}{1.5}{0.5}{0.2}{fg};
    \FillBobFlag{1}{1}{0.5}{0.5}{fr};
    
    \FillCharlieFlag{0}{0}{0.3333333}{0.4333333}{fr};
    \FillCharlieFlag{0}{1.6333333}{0.3333333}{0.266666666}{fb};
    \FillCharlieFlag{0}{2.3333333}{0.3333333}{0.3}{fg};

    \FillCharlieFlag{2}{0}{0.16666666}{0.4333333}{fb};
    \FillCharlieFlag{2}{1.6333333}{0.16666666}{0.266666666}{fg};
    \FillCharlieFlag{2}{2.3333333}{0.16666666}{0.3}{fr};

    \FillCharlieFlag{1}{0}{0.5}{0.4333333}{fg};
    \FillCharlieFlag{1}{1.6333333}{0.5}{0.266666666}{fr};
    \FillCharlieFlag{1}{2.3333333}{0.5}{0.3}{fb};
    
    \DrawFlagCanvas;

\end{tikzpicture}
    \caption{}
    \label{fig:15f}
\end{subfigure}

\caption{(a) Step 1: Move an $\tR\tR\tR$ event into one corner.
(b) Step 2: Rearrange $\gamma$, such that Alice has the order of $\tR,\tG,\tB$ in her first column. Due to the $\saab=0$ condition, this implies what Bob's first row must look like.
(c) Step 3: Rearrange $\beta$, such that Alice has the order of $\tR,\tG,\tB$ in her first row. This implies what Charlie's first row must look like.
(d) Step 4: The rest of Alice's flag is uniquely determined by the $\saab=0$ condition.
(e) Step 5: Rearrange $\alpha$ s.t. the $\tG$ events come below the $\tR$ events in Bob's first column. A single $\tG$ event uniquely determines both Bob and Charlie's second row. (Note that in the figure, we have not drawn any $\tR$ event below the original one, however, this would just imply a wider first row for Bob and Charlie, with the same coloring.)
(f) In a similar manner, a $\tB$ event in Bob's flag below Bob's first (top left) $\tR$ rectangle determines both Bob and Charlie's final row. At this point we have exhausted all possibilities for coloring the flags.}
\label{fig:3outcome_no112_flags_1}
\end{figure}
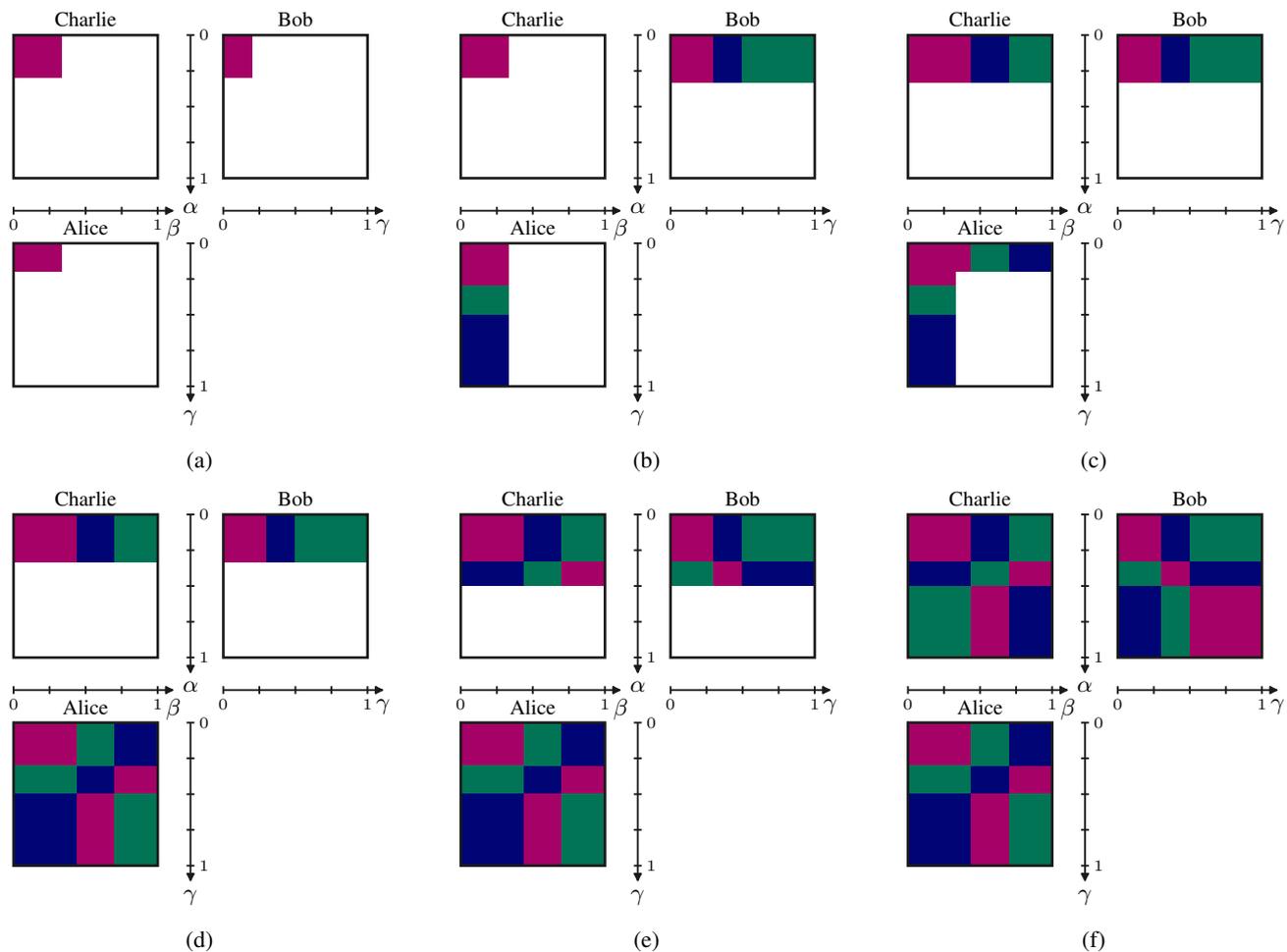

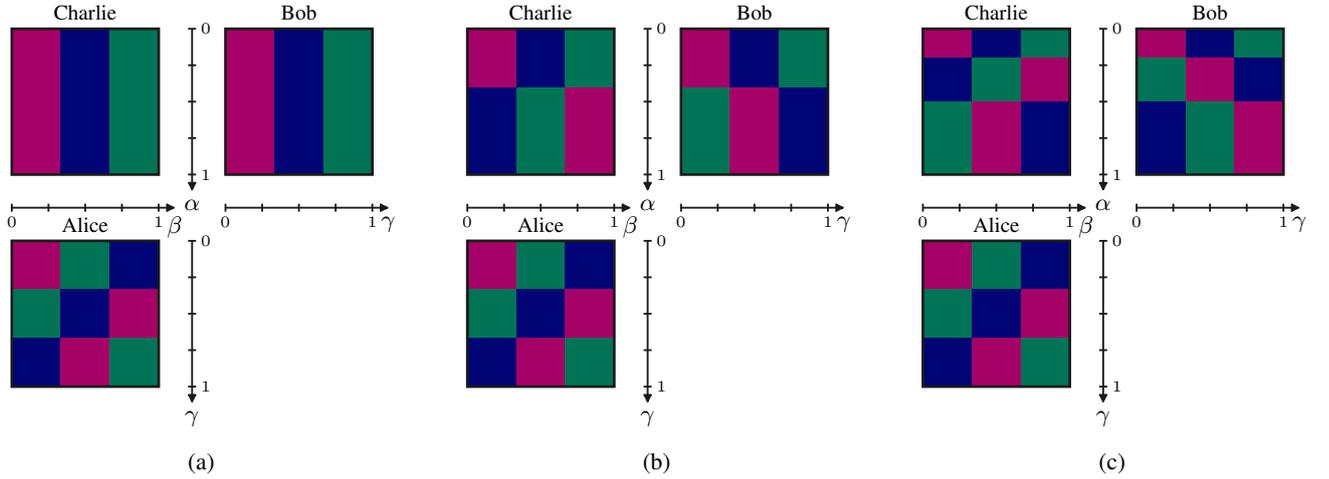
\begin{figure}[h!]
\renewcommand{\FlagSize}{55pt}
\renewcommand{\FlagSeparation}{25pt}
\begin{subfigure}{0.33\textwidth}
\begin{tikzpicture}

    \FillAliceFlag{0}{0}{0.3333333}{0.3333333}{fr};
    \FillAliceFlag{0}{1}{0.3333333}{0.3333333}{fg};
    \FillAliceFlag{0}{2}{0.3333333}{0.3333333}{fb};
    \FillAliceFlag{1}{0}{0.3333333}{0.3333333}{fg};
    \FillAliceFlag{1}{1}{0.3333333}{0.3333333}{fb};
    \FillAliceFlag{1}{2}{0.3333333}{0.3333333}{fr};
    \FillAliceFlag{2}{0}{0.3333333}{0.3333333}{fb};
    \FillAliceFlag{2}{1}{0.3333333}{0.3333333}{fr};
    \FillAliceFlag{2}{2}{0.3333333}{0.3333333}{fg};

    \FillBobFlag{0}{0}{1}{0.3333333}{fr};
    \FillBobFlag{0}{1}{1}{0.3333333}{fb};
    \FillBobFlag{0}{2}{1}{0.3333333}{fg};
    
    \FillCharlieFlag{0}{0}{1}{0.3333333}{fr};
    \FillCharlieFlag{0}{1}{1}{0.3333333}{fb};
    \FillCharlieFlag{0}{2}{1}{0.3333333}{fg};
    
    \DrawFlagCanvas;

\end{tikzpicture}%
\caption{}\label{fig:16a}
\end{subfigure}
\begin{subfigure}{0.33\textwidth}
\begin{tikzpicture}

    \FillAliceFlag{0}{0}{0.3333333}{0.3333333}{fr};
    \FillAliceFlag{0}{1}{0.3333333}{0.3333333}{fg};
    \FillAliceFlag{0}{2}{0.3333333}{0.3333333}{fb};
    \FillAliceFlag{1}{0}{0.3333333}{0.3333333}{fg};
    \FillAliceFlag{1}{1}{0.3333333}{0.3333333}{fb};
    \FillAliceFlag{1}{2}{0.3333333}{0.3333333}{fr};
    \FillAliceFlag{2}{0}{0.3333333}{0.3333333}{fb};
    \FillAliceFlag{2}{1}{0.3333333}{0.3333333}{fr};
    \FillAliceFlag{2}{2}{0.3333333}{0.3333333}{fg};
    
    \FillBobFlag{0}{0}{0.4}{0.3333333}{fr};
    \FillBobFlag{0}{1}{0.4}{0.3333333}{fb};
    \FillBobFlag{0}{2}{0.4}{0.3333333}{fg};
    
    \FillBobFlag{0.666666}{0}{0.6}{0.3333333}{fg};
    \FillBobFlag{0.666666}{1}{0.6}{0.3333333}{fr};
    \FillBobFlag{0.666666}{2}{0.6}{0.3333333}{fb};    
    
    \FillCharlieFlag{0}{0}{0.4}{0.3333333}{fr};
    \FillCharlieFlag{0}{1}{0.4}{0.3333333}{fb};
    \FillCharlieFlag{0}{2}{0.4}{0.3333333}{fg};
    
    \FillCharlieFlag{0.666666}{0}{0.6}{0.3333333}{fb};
    \FillCharlieFlag{0.666666}{1}{0.6}{0.3333333}{fg};
    \FillCharlieFlag{0.666666}{2}{0.6}{0.3333333}{fr};
    
    \DrawFlagCanvas;

\end{tikzpicture}%
\caption{}\label{fig:16b}
\end{subfigure}
\begin{subfigure}{0.33\textwidth}
\begin{tikzpicture}

    \FillAliceFlag{0}{0}{0.3333333}{0.3333333}{fr};
    \FillAliceFlag{0}{1}{0.3333333}{0.3333333}{fg};
    \FillAliceFlag{0}{2}{0.3333333}{0.3333333}{fb};
    \FillAliceFlag{1}{0}{0.3333333}{0.3333333}{fg};
    \FillAliceFlag{1}{1}{0.3333333}{0.3333333}{fb};
    \FillAliceFlag{1}{2}{0.3333333}{0.3333333}{fr};
    \FillAliceFlag{2}{0}{0.3333333}{0.3333333}{fb};
    \FillAliceFlag{2}{1}{0.3333333}{0.3333333}{fr};
    \FillAliceFlag{2}{2}{0.3333333}{0.3333333}{fg};
    
    \FillBobFlag{0}{0}{0.2}{0.3333333}{fr};
    \FillBobFlag{0}{1}{0.2}{0.3333333}{fb};
    \FillBobFlag{0}{2}{0.2}{0.3333333}{fg};
    
    \FillBobFlag{0.666666}{0}{0.3}{0.3333333}{fg};
    \FillBobFlag{0.666666}{1}{0.3}{0.3333333}{fr};
    \FillBobFlag{0.666666}{2}{0.3}{0.3333333}{fb};
    
    \FillBobFlag{1}{0}{0.5}{0.3333333}{fb};
    \FillBobFlag{1}{1}{0.5}{0.3333333}{fg};
    \FillBobFlag{1}{2}{0.5}{0.3333333}{fr};    
    
    \FillCharlieFlag{0}{0}{0.2}{0.3333333}{fr};
    \FillCharlieFlag{0}{1}{0.2}{0.3333333}{fb};
    \FillCharlieFlag{0}{2}{0.2}{0.3333333}{fg};
    
    \FillCharlieFlag{0.666666}{0}{0.3}{0.3333333}{fb};
    \FillCharlieFlag{0.666666}{1}{0.3}{0.3333333}{fg};
    \FillCharlieFlag{0.666666}{2}{0.3}{0.3333333}{fr};

    \FillCharlieFlag{1}{0}{0.5}{0.3333333}{fg};
    \FillCharlieFlag{1}{1}{0.5}{0.3333333}{fr};
    \FillCharlieFlag{1}{2}{0.5}{0.3333333}{fb};
    \DrawFlagCanvas;

\end{tikzpicture}%
\caption{}\label{fig:16c}
\end{subfigure}
\caption{All possible flags that reproduce $p_{\dagger}$ for 3 outcomes per party, (up to permutations of colors, parties and local hidden variable values). Note that any $\{p(\alpha_i)\}_{i=1}^3$ values are valid as long as they sum to one, as as illustrated in the examples (a-c), where (a) $p(\alpha_2)=p(\alpha_3)=0$, (b) $p(\alpha_3=0)$, and (c) $p(\alpha_i) > 0, \forall i$. Recall that $p(\alpha_1)$ is the height of the $\tR,\tB,\tG$ row of Charlie, $p(\alpha_2)$ is the height of the $\tB,\tG,\tR$ row of Charlie, and $p(\alpha_3)$ is the height of the $\tG,\tR,\tB$ row of Charlie.}
\label{fig:3outcome_no112_flags_2}
\end{figure}

\subsubsection{Cardinalities}

Before moving on to fixing the length of the rectangle sides within the flags, let us stop for a note on the cardinalities of the local hidden variables. For 3 outcomes per party, there are three 111-type events (e.g., $\tR\tR\tR$), and six 123-type events (e.g., $\tR\tG\tB$). That means that in the whole cube (as parametrized by $\alpha,\beta,\gamma$), there must be at least 9 distinct types of sub-cuboids ($\tR\tR\tR$, $\tR\tG\tB$, $\tR\tB\tG$, $\tG\tG\tG$, $\tG\tR\tB$, $\tG\tB\tR$, $\tB\tB\tB$, $\tB\tR\tG$, $\tB\tG\tR$). In order to generate these cuboids, the product of the three local hidden variable cardinalities must be at least 9, e.g., $|\alpha|=1, |\beta|=3, |\gamma|=3$ is already sufficient (if it can satisfy all the desired constraints, which it can as  we have seen in \cref{fig:16a}). Smaller cardinalities, e.g., $(1,1,9)$ or $(1,2,5)$ are insufficient because each party (Alice, Bob and Charlie) must output at least 3 distinct colors. 
The above reasoning implies that for any distribution where $\saaa >0 $ and $\sabc>0$, at least one party's flag is at least a 3x3 grid (with no two columns or two rows being the same in the flag). Hence the construction of Alice's flag in the previous sections was not merely a demonstration, but it was necessary that she had these many different colors in her first row and column.

Also note that, as we have shown, due to the limited possibilities when enforcing $\saab=0$, (3,3,3) is effectively the largest LHV cardinality triple. Anything larger can be rearranged to take this 3 by 3 Latin square form.

\subsubsection{Lengths of the rectangle sides}

We start from an effectively $3\times 3 \times 3$ grid, with the elements of $\{\alpha_i\}_{i=1}^3, \{\beta_i\}_{i=1}^3$ and  $\{\gamma_i\}_{i=1}^3$ denoting the symbols of the respective $\alpha,\beta$ or $\gamma$ random variables.

Let us now consider the lengths of the rectangle sides, i.e. the probabilities of $\alpha_i, \beta_j, \gamma_k$ appearing. 
First, let Bob and Charlie only have stripes, i.e., $p(\alpha_2) = p(\alpha_3) = 0$, as in \cref{fig:16a}. Then from the fact that the marginals must all be $\sfrac13$ on Bob and Charlie's flags, we immediately get that $\forall i: \,p(\beta_i)=p(\gamma_i)=\sfrac13$. 
Extending this strategy to the case that $p(\alpha_2)>0$ and $p(\alpha_3) \geq 0$, we see that next to these $p(\beta_i), p(\gamma_i)$ values, one can have any $p(\alpha_1), p(\alpha_2),p(\alpha_3) \geq 0$ such that $\sum_i p(\alpha_i) =1$. 
However, can $p(\beta_i)$ and $p(\gamma_i)$ take on different values than $\sfrac13$? No, as this is already guaranteed by the symmetry constraints for the marginals on Alice's flag. 
If any of the rectangles is larger, then the other rectangles will be smaller and will not be able to satisfy the constraint that $p(A=\tR)=p(A=\tG)=p(A=\tB)=\sfrac13$. 
An optimization in the Mathematica software confirms this intuition. 
More precisely, setting the constraints that the marginals must be $1/3$ for a given flag and positivity and normalization of the hidden variables' probabilities, we get that at least one of its sources (local hidden variables) must have equal values (e.g. $\beta_1=\beta_2=\beta_3)$. This is true for each of the three flags, and since there are three of them, at least two hidden variables must have this property. As a consequence, at least one of the flags must be composed of a 3$\times$3 grid of equal sided squares.

In summary, any local distribution with 3 outcomes, for which $\saab=0$ and for which the symmetry constraints are satisfied (as described in \cref{app:generalN}), must have $\saaa=1/3$. 
The only type of strategy is the Latin square strategy, with $p(\beta_i) = p(\gamma_i) = 1/3$ for $i=1,2,3$, and for any $p(\alpha_1), p(\alpha_2),p(\alpha_3) \geq 0$ s.t. $\sum_i p(\alpha_i) =1$. These strategies are depicted in \cref{fig:3outcome_no112_flags_2}.

Hopefully such an explicit strategy can serve as a basis point for understanding the nonlocality of distributions such as the Elegant Joint Measurement distribution, which has a large $\saaa$ value and a small (though non-zero) $\saab$ value. 
In this aspect, note that the generalization of these arguments to, e.g., 4 outputs per party is not so straightforward, as having a $\tR$ and $\tG$ on two flags implies that the third can be $\tY$ or $\tB$, i.e., it is not fixed uniquely. Similar strategies as these, however, can be constructed, though they are not as unique, as can be seen in the main text in \cref{fig:symmmaxcorr}.

\newpage
\section{Finding inequality parameters with LHV-Net, and perhaps a $\saaa>\sfrac14$ local distribution?}
\label{app:inequality_params}

LHV-Net, the neural network which parametrizes local models obeying the triangle structure, can be used to maximize the left hand side of the inequality of \cref{eq:ineq:basic}, namely of 
\begin{align}
f_w(p) \leq f_w(\pejm) - \delta_w,
\end{align}
where $f_w(p) = w \cdot \saaa(p) - (1-w) \Delta_{l}(p)$ is a function that is large for large $\saaa$ and for symmetric distributions ($\Delta_{l}$ quantifies asymmetry). A strictly positive $\delta_w$ indicates that the EJM distribution outperforms local models for this inequality. 
For several different values of $w$ we see whether LHV-Net can outperform the value of the EJM distribution. 
We plot $\max_{\{p \text{ by LHV-Net}\}} f_w(p) - f_w(\pejm)$ in \cref{fig:LHVnet:inequality_test}. 
We can see that for a range of $w$ values LHV-Net can \textit{not} go over the value 0, hence for these $w$ values we can conjecture inequalities.
The corresponding gaps, $\delta_w$, can be read off from the vertical distance between the blue line and LHV-Net's result for these $w$'s. 
There is no unique best inequality in general, however, in the maintext we chose those $w$ values where $\delta_w$ was largest, namely $w^*=0.678$, $\delta_{w^*}=0.069$ for $l=1$, and $w^*=0.16$, $\delta_{w^*} = 0.012$ for $l=2$,

\subsection{Absolute value penalty inequalities ($l=1$)}
When examining the plot for $l=1$, we find that there seem to be two natural regimes: 1) where the asymmetry is strongly punished (small $w$ values): in this regime fully symmetric local models with high $\saaa$ seem to perform best; and 2) where asymmetry is not punished strongly (large $w$ values), for which the all-111 strategy performs best, where all parties always output 1 (or blue). 
The performance of these two extremal strategies are also depicted in \cref{fig:LHVnet:inequality_test}.
Notice that LHV-Net seems to find a distribution with $\saaa>\sfrac14$. 
This can be seen from the slope of the LHV-Net's results as a function of $w$. 
From the distributions found by LHV-Net we extract that $\saaa\approx 0.289$ could maybe be the maximal $\saaa$ within the symmetric subspace. 

A strategy found by the neural network for $\saaa\approx 0.294$ ($\Delta_{l=1} \approx 0.0136, \Delta_{l=2} \approx 4.716 \cdot 10^{-6}$) can be found in \cref{fig:LHVnet:maxstrat}. 
It is peculiar that there are some approximate symmetries also in this response function (Alice and Bob's symmetry under exchange of blue and red and of green and yellow, if the $\gamma$ axis is swapped, as well as the same color change when Charlie flips her strategy about the diagonal axis). 
Though not exact, such symmetries hint at some deeper structure that is not yet well understood.

Finally, note that any known local distribution can be used to obtain an upper bound on the $\delta_w$ value, by rearranging \cref{eq:ineq:basic} as
\begin{equation}
    \delta_w \leq \Delta_{l} + w \left( \frac{100}{256} - \saaa - \Delta_{l} \right)
\end{equation}
For example for $l=1$, if we evaluate this for the deterministically all-1 distribution ($p(abc)=[1,1,1]$), we find that
\begin{equation}
    \delta_w \leq \frac{3}{2}-\frac{135}{64}w \qquad(l=1),
\end{equation}
which immediately implies that only $w<0.711$ are candidate $w$ values, as $\delta_w>0$ is required.
Moreover, for $w=0.678$ this expression gives us $\delta_w\leq 0.069$, in correspondence with the results in the maintext.
Based on the numerics (\cref{fig:LHVnet:inequality_test}) we believe that for the $l=1$ case this bound (and the corresponding bound for the largest $\saaa$ symmetric distribution) will ultimately give the appropriate value of $\delta_w$. Writing this bound for the $\saaa=0.25$ distribution, we get
\begin{equation}
    \delta_w \leq \frac{9}{64} w \qquad(l=1),
\end{equation}
which is perhaps not tight, as there might be a symmetric distribution with $\saaa\approx0.289$ which would give an even tighter bound. In fact, evaluating the (not precisely symmetric) model found by LHV-Net gives 
\begin{equation}
    \delta_w \leq 0.013627  + 0.082912 \, w \qquad(l=1),
\end{equation}
giving $\delta_w\leq0.0698$ for $w=0.678$.

\subsection{Squared penalty inequalities ($l=2$)}
For the $l=2$ penalty function, there seems to be an additional intermediate regime, where neither the all-111 nor the fully symmetric $\saaa\approx 0.25$ distribution are optimal. We find that here LHV-Net actually recovers the distribution of \cref{eq:high111local}, which we name ``squares'' distribution due to the four large squares appearing in the response functions. 

We evaluate several local distributions to get the following bounds on $\delta_w$:
\begin{alignat}{3}
    & \delta_w &&\leq \frac{9}{64}w &&\qquad(l=2; \saaa=0.25),\\
    & \delta_w &&\leq 4.7163 \cdot 10^{-6} + 0.0965\, w &&\qquad(l=2; \text{LHV-Net, } \saaa=0.294, \Delta_2 = 4.7163 \cdot 10^{-6}),\\
    & \delta_w &&\leq \frac{5}{96} -\frac{124}{768}w &&\qquad(l=2; \text{``squares'' strategy, \cref{eq:high111local}})
\end{alignat}
In particular these bounds imply that only $w<\frac{10}{31} \approx 0.3226$ are viable $w$ values, and that $\delta_w<0.015$ for $w^*\approx 0.16$, which is of course a bit farther from the recovered $\delta_w^* \approx 0.012$, as can be seen from the inset in \cref{fig:LHVnet:inequality_test} (right). Moreover, notice that the $\delta_w$ bound of the fully symmetric $\saaa=0.25$ distribution and the intermediate ``squares'' distribution \cref{eq:high111local} meet at $w\approx 0.1724$, close to the numerically extracted optimal value of $w^* \approx 0.16$, hinting that the transition region between these two distributions is highly relevant for finding an inequality for $l=2$.

\begin{figure*}[t!]
    \centering
    	\includegraphics[width =0.49\textwidth]{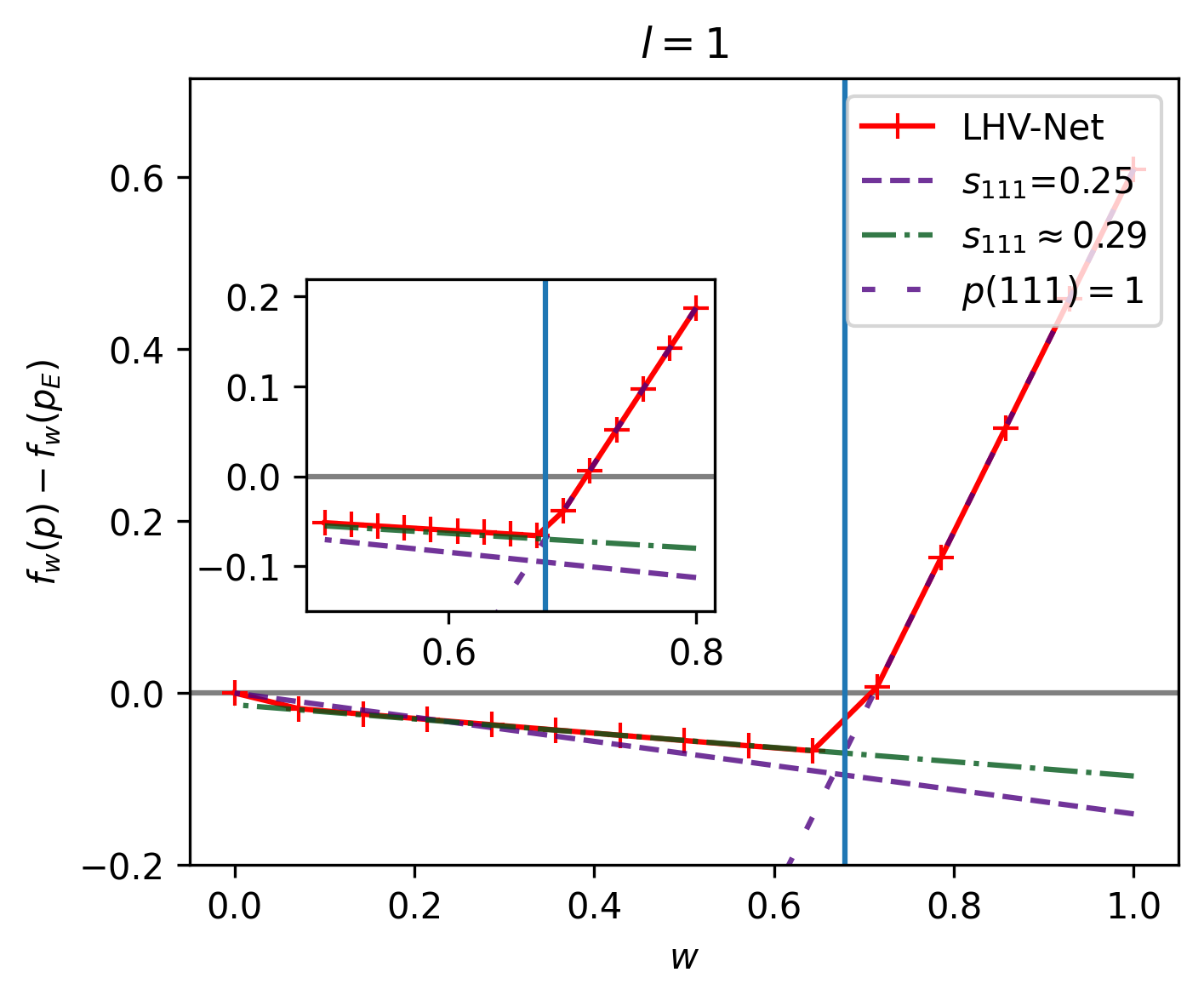}
    	\includegraphics[width =0.49\textwidth]{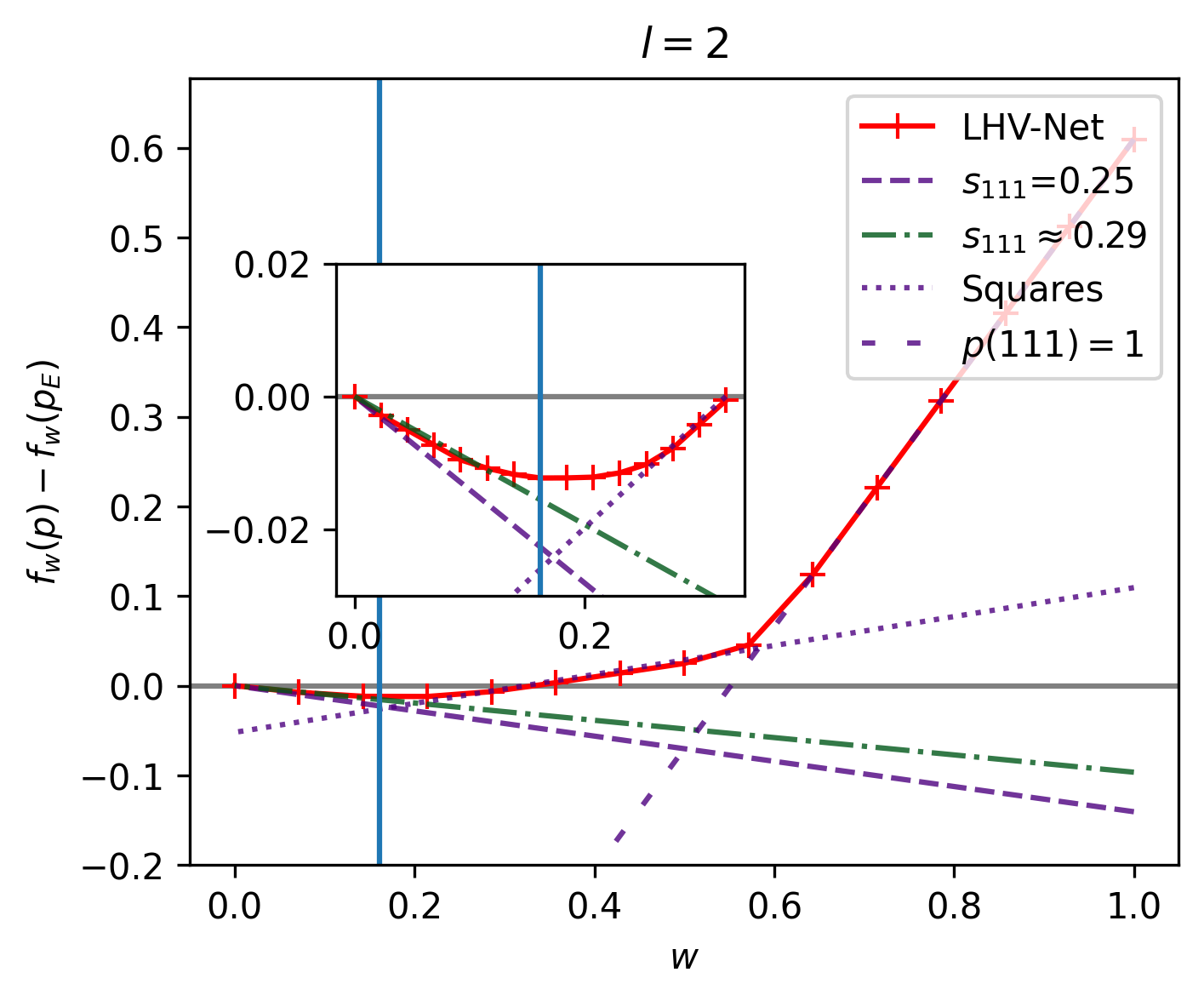}
    \caption{LHV-Net trying to maximize $f_w(p) - f_w(\pejm)$ (see \cref{eq:ineq:basic}) as a function of $w$, for $l=1$ (left) and $l=2$ (right). Additionally we plot the EJM distribution (grey, at $y=0$), deterministically outputting-1 distribution (i.e. $p(abc)=[1,1,1]$, sparsely dashed violet, highly asymmetric, but high $\saaa$), a symmetric distribution with $\saaa=\frac{1}{4}$ (dashed violet, current best analytic strategy that is fully symmetric), and the distribution found by LHV-Net that is almost symmetric and has $\saaa\approx0.29$ (green dot-dashed). Moreover, for $l=2$ we also depict the values of the distribution in \cref{eq:high111local} (called ``Squares'', dotted violet). Candidate $w$ parameters are those where the LHV-Net's results are below the EJM's value, i.e. below zero. These correspond to a positive $\delta_w$. The best $\delta_w$ estimates are signaled by vertical lines and are for $l=1$ ($l=2$) at $w^*\approx 0.678$ ($w^*\approx 0.16$) with $\delta_{w^*}\approx 0.069$ ($\delta_{w^*}\approx 0.012$).}
    \label{fig:LHVnet:inequality_test}
\end{figure*}

\subsection{$\saaa>\sfrac14$ local distribution?}
\label{app:saaa>0.25}

When maximizing the inequalities for $l=1$ for small $w$ values (i.e. in the regime symmetry is crucial), LHV-Net finds a distribution which is almost symmetric, but has $\saaa>\sfrac14$. Below are some characteristics of the distributions, as well as for a deterministic approximation of the LHV-Net's distribution (whose flags can be seen in \cref{fig:LHVnet:maxstrat}(left)), and a reference distribution which is analytically symmetric and has $s_{111}=\sfrac14$. The full LHV-Net and its deterministic approximation distributions can be found in the data appendix~\cite{data_appendix}. Moreover, we plot the distributions in \cref{fig:LHVnet:maxstrat}.

\begin{center}
\begin{tabular}{c||c|c|c|c|c}
          & $s_{111}=\sfrac14$ distr. & LHV-Net mean ($\pm$ std. dev.) & LHV-Net range & det. approx. mean ($\pm$ std. dev.) & det. approx. range\\ \hline \hline
$p(111)$ & 0.0625     &    $0.0723 \pm 0.5\%$      & $[0.0719,0.0727]$ &    $0.0735 \pm 0.2\%$      & $[0.0733,0.0738]$ \\
$p(112)$ & 0.0125     &    $0.0118 \pm 1.5\%$      & $[0.0114,0.0123]$ &    $0.0116 \pm 2.4\%$      & $[0.0112,0.0124]$ \\
$p(123)$ & 0.0125     &    $0.0119 \pm 1.2\%$     & $[0.0115,0.0122]$ &    $0.0120 \pm 2.3\%$     & $[0.0116,0.0128]$
\end{tabular}
\end{center}

Though the distribution is not exactly symmetric, several pieces of numeric evidence point towards the existence of a distribution which has $\saaa>\sfrac14$ and is fully symmetric, summarized in the following list.
\begin{itemize}
    \item The nonlocality of the Elegant distribution has previously been conjectured be robustness to noise at the source (up to $20\%$ noise) and at the detectors (up to $14\%$ noise)~\cite{krivachy_neural_2020}. The extracted noise robustness values indicate that the noisy elegant distribution is local for $\saaa = 0.289$ and $\saaa=0.286$ for the two noise models, respectively.
    \item For the 3 outcome-per-party scenario ($N=3$), there is an analytic example of a distribution for which $\saaa = 7/18 >1/3$ (see \cref{app:3outcomesExample}).
    \item The maps of the symmetric subspace as found by LHV-Net indicate a bump in the local set around the area of $p(112) \approx p(123)$, where $\saaa>1/N$ is possible. The bump is visible for the examined cases of $N=3,4,5,6$ (see \cref{fig:nn cards}).
    \item When maximizing the inequality for $l=1$ in \cref{fig:LHVnet:inequality_test} (left), the slope of the maximum values are above the line that corresponds to a $\saaa = 0.25$ maximum value (and instead line up with a hypothesis of $\saaa \approx 0.29$ as a maximum value).
\end{itemize}
Though many of these evidences rely on the numerics of LHV-Net, they have been obtained in different manners. Moreover, the explicit construction of a distribution with $\saaa>1/3$ for $N=3$ is an analytic result obtained independently from numerics. Together, these lead us to believe that the structure of the local set in the symmetric subspace is more surprising than it seems at first sight, and that it can not be perfectly characterized by a simple bound on $\saaa$.

\begin{figure*}[t!]
    \centering
    \includegraphics[width =0.47\textwidth]{nn_flags.pdf}
    \includegraphics[width =0.52\textwidth]{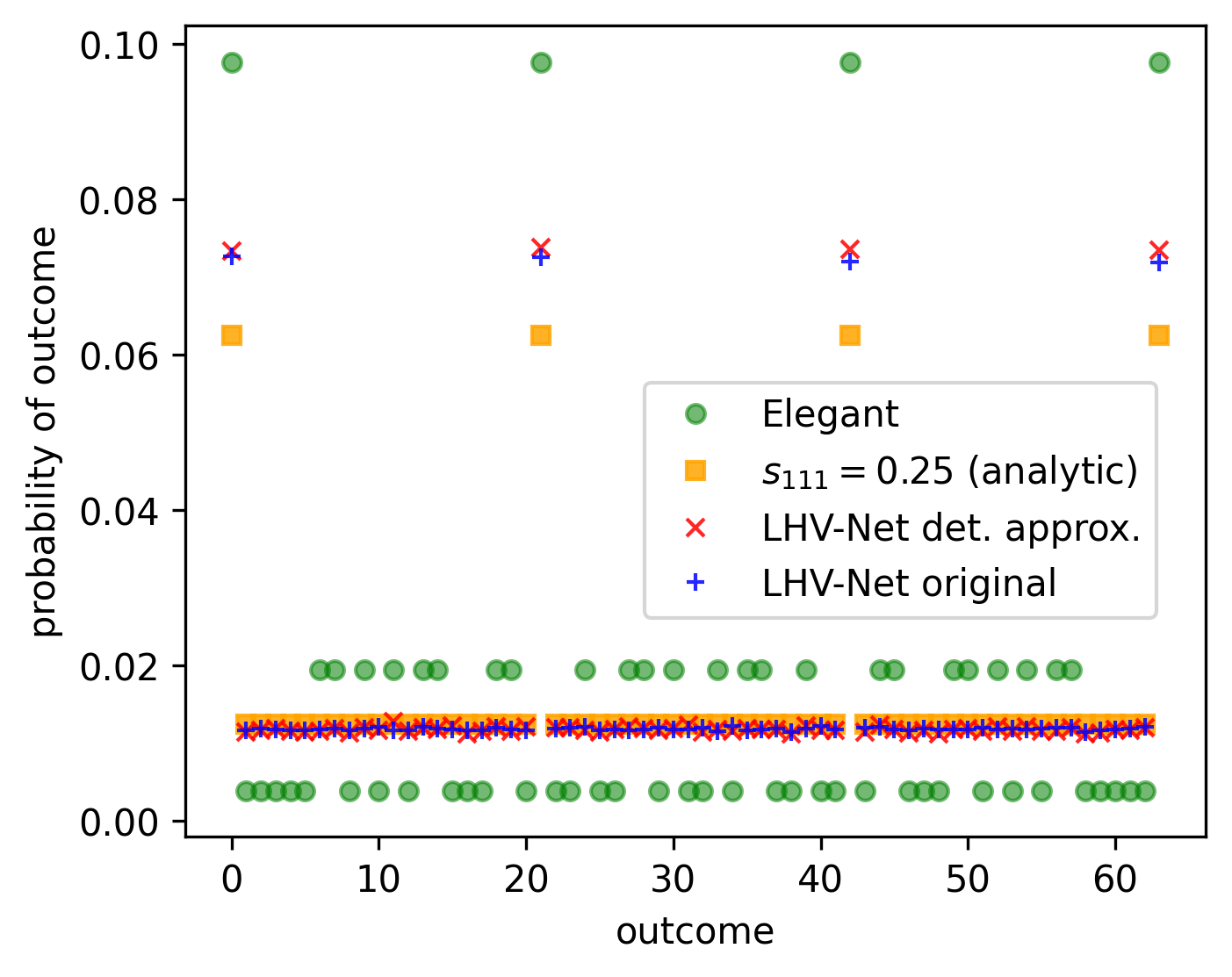}
    \caption{(Left) Flags for the deterministic approximation of the strategy found by LHV-Net, which has $\saaa\approx 0.29$. (Right) Distribution found by LHV-Net and its deterministic approximation, compared to the Elegant distribution and an analytically symmetric $\saaa=0.25$ distribution with $p(112)=p(123)$. Exact data for these flags and distributions can be found in the data appendix~\cite{data_appendix}.}
    \label{fig:LHVnet:maxstrat}
\end{figure*}

\end{document}